\begin{document}
\newtheorem{thm}{Theorem}
\newtheorem{cor}[thm]{Corollary}
\newtheorem{conj}[thm]{Conjecture}
\newtheorem{lemma}[thm]{Lemma}
\newtheorem{prop}[thm]{Proposition}
\newtheorem{problem}{Problem}
\newtheorem{remark}{Remark}
\newtheorem{definition}{Definition}
\newtheorem{example}{Example}

\newcommand{\bp}{{\bm{p}}}
\newcommand{\bq}{{\bm{q}}}
\newcommand{\bc}{{\bm{c}}}
\newcommand{\be}{{\bm{e}}}
\newcommand{\br}{{\bm{r}}}
\newcommand{\bv}{{\bm{v}}}
\newcommand{\bx}{{\bm{x}}}
\newcommand{\bbf}{{\bm{f}}}
\newcommand{\balpha}{{\bm{\alpha}}}
\newcommand{\bbeta}{{\bm{\beta}}}
\newcommand{\bdelta}{{\bm{\delta}}}
\newcommand{\bPi}{{\bm{\Pi}}}
\newcommand{\bC}{{\bm{C}}}
\newcommand{\bL}{{\bm{L}}}
\newcommand{\bI}{{\bm{I}}}
\newcommand{\bP}{{\bm{P}}}
\newcommand{\bD}{{\bm{D}}}
\newcommand{\bQ}{{\bm{Q}}}
\renewcommand{\dagger}{{T}}

\newcommand{\differential}{{\rm{d}}}

\newcommand{\interior}[1]{%
  {\kern0pt#1}^{\mathrm{o}}%
}

\newcommand{\mR}{{\mathbb R}}
\newcommand{\mZ}{{\mathbb Z}}
\newcommand{\Prob}{{\rm Pr}}
\newcommand{\D}{{\mathbb D}}
\newcommand{\cX}{{\mathcal X}}
\newcommand{\cT}{{\mathcal T}}
\newcommand{\cS}{{\mathcal S}}
\newcommand{\E}{{\mathbb E}}
\newcommand{\mcN}{{\mathcal N}}
\newcommand{\mcR}{{\mathcal R}}
\newcommand{\diag}{\operatorname{diag}}
\newcommand{\tr}{\operatorname{trace}}
\renewcommand{\odot}{\circ}
\newcommand{\ignore}[1]{}

\newcommand{\magenta}{\color{magenta}}
\newcommand{\red}{\color{red}}
\newcommand{\blue}{\color{blue}}
\newcommand{\gray}{\color{gray}}
\definecolor{grey}{rgb}{0.6,0.3,0.3}
\definecolor{lgrey}{rgb}{0.9,.7,0.7}

\newcommand{\symsum}{\displaystyle\sum_{\rm{symm}}}

\def\spacingset#1{\def\baselinestretch{#1}\small\normalsize}
\setlength{\parindent}{20pt}
\setlength{\parskip}{12pt}
\spacingset{1}

%\title{\huge Stability Theory in 
%$\ell_1$ for Nonlinear Markov Chains and\\ Stochastic Models for Opinion Dynamics over Influence Networks}
%\title{\huge Stability Theory 
%for Stochastic Models for\\ Opinion Dynamics over Influence Networks}
\title{\huge Stability Theory 
of Stochastic Models in Opinion Dynamics}

\author{Zahra Askarzadeh, Rui Fu, Abhishek Halder, Yongxin Chen, and Tryphon T.\ Georgiou
\thanks{Z.\ Askarzadeh, R.\ Fu, and T.\ T. Georgiou are with the Department of Mechanical and Aerospace Engineering, University of California, Irvine, CA; emails: zaskarza@uci.edu, rfu2@uci.edu, tryphon@uci.edu}
\thanks{A.\ Halder is with the Department of Applied Mathematics and Statistics, University of California, Santa Cruz,
        CA 95064, USA;
        {ahalder@ucsc.edu}}
\thanks{Y.\ Chen is with the Department of Electrical and Computer Engineering, Iowa State University, IA 50011,
Ames, Iowa, USA; {yongchen@iastate.edu}}
}

\markboth{}{}

\maketitle

\begin{abstract}
We consider a certain class of nonlinear maps that preserve the probability simplex, i.e., stochastic maps, that are inspired by the DeGroot-Friedkin model of belief/opinion propagation over influence networks. The corresponding dynamical models describe the evolution of the probability distribution of interacting species. Such models where the probability transition mechanism depends nonlinearly on the current state are often referred to as {\em nonlinear Markov chains}. In this paper we develop stability results and study the behavior of representative opinion models. The stability certificates are based on the contractivity of the nonlinear evolution in the $\ell_1$-metric.  We apply the theory to two types of opinion models where the adaptation of the transition probabilities to the current state is exponential and linear, respectively--both of these can display a wide range of behaviors. We discuss continuous-time and other generalizations.
\end{abstract}

\noindent{\bf Keywords:} $\ell_1$-stability of stochastic maps, nonlinear  Markov semigroups, opinion dynamics, influence networks, reflected appraisal.

\section{Introduction}
Models of social interactions and the formation of opinions in large groups have been receiving increasing attention in recent years  (see \cite{jia2015opinion,chen2017distributed,altafini2013consensus,friedkin2015problem,parsegov2016novel,acemouglu2013opinion} and the references therein). As the basis for social exchanges, an averaging mechanism has been postulated in the literature, whereby the outcome represents a weighted sum of individual preferences or beliefs. In turn, the averaging mechanism itself is modified by the outcome of past interactions, reflecting relative increase or decrease in the confidence and, thereby, influence of particular individuals. Such feedback models can be traced to \cite{degroot1974reaching,friedkin2010attitude,FriedkinJohnsen,hegselmann2002opinion}.

Averaging schemes leading to consensus are broadly relevant in coordination of dynamical systems such as co-operating drones or ground robots, sensor networks, formation flight, and distributed frequency regulation in power grid, see, e.g.\ \cite{bullo2009distributed, cao2013overview, zhang2012convergence}. The distinguishing feature of social interaction models has been the postulate of a suitable nonlinear effect that enhances or, perhaps, diminishes the influence of particular individuals in the group. The purpose of this paper is first to step back, and view the dynamics as a nonlinear random walk. We then develop a stability theory for corresponding stochastic maps by resorting to the $\ell_1$ metric. The key element of our approach is to consider the differential of the stochastic maps and assess whether these are contractive in $\ell_1$.

More specifically, we consider a discrete-time (or rather, discrete-indexed, where the index may represent issue being considered) process $\{X_t\mid t\in \mZ_+\}$ taking values on a finite state-space $\cX=\{1,\,2,\ldots,\,n\}$. We denote by $\bp(t)$ the marginal probability vector, i.e., its entry $p_{i}(t)=\Prob(X_t=i)$ is the occupation probability of state $i$ at iteration index $t$, and postulate a transition mechanism that depends nonlinearly on the occupation probability (a.k.a. belief state) of the process according to the rule:
\begin{subequations}\label{eq:MC}
\begin{align}\label{eq:transition}
&\Pi_{ij} := \Prob(X_{t+1}=j \mid X_t=i)= \rho_i(t)\delta_{ij} + (1-\rho_i(t)) C_{ij},\\
&\rho_i(t) :=r(p_{i}(t)),
\end{align}
\end{subequations}
where $\bC:=[C_{ij}]_{i,j=1}^n$ is a row-stochastic matrix\footnote{A matrix $\bC$ is referred to as stochastic (or, row-stochastic, for specificity) provided $C_{ij}\geq 0$ for all $(i,j)$ and $\sum_jC_{ij}=1$. Such matrices map the probability simplex into itself (or into another, if not square).}, $\delta_{ij}$ equals one for $i=j$ and zero otherwise, and $r(\cdot)$ is a differentiable function
\[
r\;:\;[0,1] \mapsto [0,1].
\]
In general, the mapping $r(\cdot)$ needs to be neither onto nor invertible (nor independent of $i$, as taken at the early part of the paper, for simplicity). Typical examples include
\begin{align}\label{eq:options}
r(x)=x,\; 1-x,\;1-e^{-\gamma x},\;{\rm or}\;e^{-\gamma x}, \mbox{ for some } \gamma >0.
\end{align}

Equation \eqref{eq:transition} represents a model for a ``lazy'' random walk where the transition probabilities $C_{ij}$ are modified to increase/decrease the ``prior'' return-probability from $C_{ii}$ to $\rho_i + C_{ii}(1-\rho_i)$, in a way that depends on the probability of the corresponding state, since $\rho_i(t)=r(p_i(t))$. For this reason, we refer to $r(\cdot)$ as the \emph{reinforcement function}. Thus, the essence of the above model is that the random walk adapts the return probability of each state so as to promote or discourage residence in states with high marginal probability. %A schematic is given in Figure \ref{fig:fig1}.
An alternative interpretation of the time $t$-marginal probabilities is as representing confidence or influence which, accordingly, is modified constructively or destructively by the likelihood of the particular state of the process. It has been argued, for instance, that high confidence and success in an argument, begets higher confidence.

The model in \eqref{eq:MC} provides an example of a discrete-time, discrete-space {\em nonlinear Markov semigroup} that maps the probability simplex on $\cX$ into itself \cite{kolokoltsov2010nonlinear}. In general, for a nonlinear Markov semigroup to define finite-dimensional distributions (and thereby a random process), one needs to decide on a stochastic representation as in \eqref{eq:MC}, which may not be unique. Then, once the transition probabilities are specified as a nonlinear function of the state, the stochastic process can be defined in the form of a time inhomogeneous-Markov chain \cite[Chapter 1]{kolokoltsov2010nonlinear}.
Such {\em nonlinear Markov} models arise naturally as limits of interacting particle systems that model processes with mass-preserving interactions \cite[Section 1.3]{kolokoltsov2010nonlinear}. Herein, we will not be concerned with the probabilistic nature and properties of such systems, but instead focus on the dynamical response and stability of equilibria on the probability simplex. Thus, for the most part, we will focus on stochastic maps with transition probabilities as above.

In the context of opinion dynamics, the matrix $\bC=[C_{ij}]_{i,j=1}^n$ in \eqref{eq:MC}, encodes the influence of neighboring nodes--a standing assumption throughout is that $\bC$ corresponds to a {\em strongly connected aperiodic Markov chain}. With regard to the reinforcement mechanism, of particular interest are exponentially-scaled transition kernels (introduced here)
\begin{subequations}\label{eq:Gs}
\begin{align}
\Pi_{ij}(x)&= (1-e^{-\gamma x_i})\delta_{ij}+e^{-\gamma x_i}C_{ij}, \label{eq:Gkernel}\mbox{ and its ``opposite''}\\
\overline{\Pi}_{ij}(x)&= e^{-\gamma x_i}\delta_{ij}+(1-e^{-\gamma x_i})C_{ij}, \label{eq:Gbarkernel}
\end{align}
\end{subequations}
as well as the linearly-scaled kernels
\begin{subequations}\label{eq:GsDGF}
\begin{align}
\Pi_{ij}(x)&= \gamma x_i\delta_{ij}+(1-\gamma x_i)C_{ij}, \label{eq:GkernelDGF} \mbox{ and}\\
\overline{\Pi}_{ij}(x)&= (1-\gamma x_i)\delta_{ij}+\gamma x_iC_{ij} \label{eq:GbarkernelDGF}
\end{align}
\end{subequations}
which have been considered in, e.g., \cite{jia2015opinion}.
Naturally, in all these cases, $\bPi=\left[\Pi_{ij}\right]_{i,j=1}^n$ and $\overline{\bPi}=\left[\overline{\Pi}_{ij}\right]_{i,j=1}^n$ are row-stochastic (i.e., rows sum to one).
Those two models will be analyzed in some detail as they provide rather insightful examples of the dynamics that one can expect of such models. We highlight ranges of parameters where globally stable behavior is observed and where the process tends towards a stationary distribution and, others, where multiple equilibria, periodic orbits, or chaotic behavior is observed. Local stability of equilibria (i.e., local stationarity of distributions), if that is the case, can be assessed using the theory developed in Section \ref{sec:stability}.

The evolution of the marginal probability (column) vector  $\bp(t)$ corresponding to \eqref{eq:MC} (also, (\ref{eq:Gs}) and (\ref{eq:GsDGF})) is as follows:
\begin{subequations}\label{eq:lazywalkboth}
\begin{align}\label{eq:lazywalk}
\bp(t+1) &= \bPi(\bp(t))^{\dagger}\bp(t),
\end{align}
with
\begin{align}\label{bPi1}
\bPi(\bp(t))^\dagger&= \bD(\bp(t))+\bC^\dagger (\bI-\bD(\bp(t))),
\end{align}
and a diagonal matrix
\begin{align}\label{bPi2}
\bD(\bp(t))&= \diag(\br(\bp(t))),
\end{align}
\end{subequations}
where ``$(\cdot)^\dagger$'' as usual denotes transposition.
As noted, throughout, $\bC$ is row stochastic and corresponds to a strongly connected and aperiodic chain. The starting point for the evolution is $\bp_0\in \cS_{n-1}$, where
\[\cS_{n-1}:=\{\bx\in \mR^n\mid x_i\geq 0,\;\sum_ix_i=1\}
\]
denotes the probability simplex. By $\cS_{n-1}^o$ we will denote the (open) interior of $\cS_{n-1}$.

A closely related alternative model for the evolution of influence and opinion dynamics that has appeared in the literature, is to postulate the transition mechanism
\begin{eqnarray}\label{eq:DeGrootFriedkin}
&&\bp(t+1) = \left[\bPi(\bp(t))^{\dagger}\right]_{\rm FP},
\end{eqnarray}
where the notation
$\left[ \bPi^\dagger \right]_{\rm FP}$
represents the mapping $\bPi^\dagger \mapsto \bq\in \cS_{n-1}$ of an irreducible (row) stochastic matrix $\bPi$ to its corresponding Frobenius-Perron eigenvector, i.e., to the unique probability (column) vector $\bq$ that satisfies $\bPi^\dagger \bq= \bq$. The relation between the two update-mechanisms, \eqref{eq:lazywalk} and \eqref{eq:DeGrootFriedkin}, can be understood by virtue of the fact that
$
\left(\bPi(\bp(t))^{\dagger}\right)^{k} \bp(t)
$
is approximately equal to the right Frobenius-Perron eigenvector of $\bPi(\bp(t))^{\dagger}$ for sufficiently large $k$, and hence a suitable modification of the dynamics in \eqref{eq:lazywalk} (i.e., by introducing a suitably high exponent) approximates the dynamics in \eqref{eq:DeGrootFriedkin}. We will not be concerned with the update mechanism in \eqref{eq:DeGrootFriedkin}, as our primary interest is in the general transition mechanism \eqref{eq:lazywalk}. It is reasonable to expect that stochastic maps in either form, \eqref{eq:lazywalk}, or \eqref{eq:DeGrootFriedkin}, for specific choices of kernel $\Pi_{ij}(\cdot)$ and generalizations (see Sections \ref{sec:continuous}-\ref{sec:grouping}), have appealing properties as models of opinion dynamics.

The exposition in our manuscript proceeds as follows. In Section \ref{sec:stability}, we provide conditions that ensure contractivity in $\ell_1$ (Theorem \ref{thm1} and Propositions \ref{cor3}, \ref{cor4}), quantify the $\ell_1$-gain (Theorem \ref{thm6}), and give conditions for attractiveness of a periodic orbit (Proposition \ref{cor5}). We discuss ``exponential influence'' models in Section \ref{sec:Gs} and DeGroot-Friedkin models in Section \ref{sec:DGF}. In both sections we present and analyze representative dynamical behaviors via examples. We comment briefly on the continuous-time counterpart of such models and, in Section \ref{sec:grouping} we introduce local coupling in the reinforcement mechanism to model grouping between colluding subgroups in opinion forming, and comment on extensions of the theory to account for such interactions. Section \ref{sec:conclusion} provides concluding remarks and directions.

%\begin{center}
%\begin{figure}[t]
%%\vspace*{-0.31in}
%  \centering
%    \includegraphics[width=0.55\linewidth]{figures/BlockDiagmIntroNoSum.pdf}
%%    \vspace*{-0.25in}
%    \caption{Nonlinear-feedback Markov model.}
%    \label{fig:fig1}
%    %\vspace*{-0.25in}
%\end{figure} 
%\end{center}

\section{$\ell_1$-contractivity}\label{sec:stability}
We consider stochastic maps of the particular form
\begin{subequations}\label{eq:class}
\begin{align}\label{eq:nonlinearevolution}
\bbf &:\,\cS_{n-1}\to \cS_{n-1}\,:\,\bp\mapsto \bbf(\bp):=\bPi(\bp)^\dagger \bp = \bq,
\end{align}
where $\bPi(p)$ is of the form
\begin{align}\label{bP3}
\bPi(\bp)^\dagger&=\bC_0^\dagger\bD(\bp) + \bC_1^\dagger (\bI-\bD(\bp)),
\end{align}
\end{subequations}
with $\bC_0$, $\bC_1$ both row stochastic, and $\bD(\bp)$ diagonal with entries bounded by one; the expression \eqref{bPi1} is the special case where $\bC_0$ is the identity matrix.
Note that $\bPi(\bp)$ has nonnegative entries with rows summing to one for all $\bp\in\cS_{n-1}$.
%{\color{red}Naturally, the matrix $\bPi(\bp)$ must have nonnegative entries for the map to preserve $\cS_{n-1}$,
%which can be easier to ensure when $\bD(\bp)$ is diagonal.}
Under suitable conditions, which often hold for the type of dynamics that we consider, $\bbf$ turns out to be contractive, and even strictly contractive\footnote{The map $\bbf$ is strictly contractive on $S\subset\cS_{n-1}$ if there exists $\epsilon>0$ that may depend on $S$ so that
\[
(1-\epsilon)\|\bp^b-\bp^a\|_1 \geq  \|\bbf(\bp^b)-\bbf(\bp^a)\|_1,
\]
for all  $\bp^{a},\bp^{b}\in S$. It is contractive if the statement holds for $\epsilon=0$.} in $\ell_1$, in $\cS_{n-1}$ or subsets thereof as specified.

Denote by $\cT$ the tangent space of the probability simplex, i.e.,
\[
\cT:=\{\bdelta \in \mR^n \mid {\mathds 1}^\dagger \bdelta=0\}
\]
with $\mathds 1$ the column vector of ones.
The Jacobian of $\bbf(\cdot)$ is
\begin{align*}
\differential\bbf\,:\, \cT\to \cT\;:\; (\delta_j)_{j=1}^n \mapsto &\left( \sum_{i=1}^n \Pi_{ij} \delta_i\right)_{j=1}^n\\
&+\left( \sum_{i,k=1}^n \frac{\partial\Pi_{ij}}{\partial p_k} p_i\delta_k\right)_{j=1}^n
\end{align*}
where, by interchanging indices, the latter term can be written as
\[
\left( \sum_{k,i=1}^n \frac{\partial\Pi_{kj}}{\partial p_i} p_k\delta_i\right)_{j=1}^n.
\]
Thus, $\differential\bbf$ can be written in a matrix form as
%\begin{subequations}
\begin{equation}\label{eq:oneQ}
\differential\bbf\,:\, \bdelta \mapsto \left(\underbrace{\bPi^\dagger + \left[ \frac{\partial \bPi^\dagger}{\partial p_1}\bp,\ldots, \frac{\partial \bPi^\dagger}{\partial p_n}\bp \right]}_{\bQ^\dagger}\right)\bdelta.
\end{equation}
Since ${\mathds 1}^\dagger\bC_i^\dagger={\mathds 1}^\dagger$, for $i\in\{0,1\}$, the columns on the second entry in the expression for $\bQ^\dagger$ satisfy
\begin{align*}
{\mathds 1}^\dagger\left( \frac{\partial \bPi^\dagger}{\partial p_j}\bp\right) &= {\mathds 1}^\dagger\left(\bC_0^\dagger \frac{\partial \bD}{\partial p_j}\bp -
 \bC_1^\dagger\frac{\partial \bD}{\partial p_j}\bp\right)\\
 &={\mathds 1}^\dagger\frac{\partial \bD}{\partial p_j}\bp -
{\mathds 1}^\dagger\frac{\partial \bD}{\partial p_j}\bp = 0.
 \end{align*}
Hence,\footnote{It is easy to see that this property also holds for maps that are composition of maps with the structure in
\eqref{eq:class}.}
\begin{equation}\label{eq:normalization}
{\mathds 1}^\dagger \bQ^\dagger = {\mathds 1}^\dagger \bPi^\dagger = \mathds{1}^\dagger.
\end{equation}
%\end{subequations}
The following serves as a key ingredient in subsequent developments.\\

\begin{thm}\label{thm1} Let  $\bbf(\cdot)$ be as in \eqref{eq:class} with
$\bD(\bp)$ continuously differentiable, and suppose that the Jacobian matrix $\bQ$ defined in \eqref{eq:oneQ} has strictly positive entries in $\cS_{n-1}^o$. The following hold:\\[-.25in]
\begin{itemize}
\item[(a)] $\bbf$ is strictly contractive in $\ell_1$ in compact subsets of $\cS_{n-1}^o$.\\[-.1in]
\item[(b)] Provided $\bbf$ has a fixed point in $\cS_{n-1}^o$, this fixed point is the only fixed point and it is globally attracting. \end{itemize}
\end{thm}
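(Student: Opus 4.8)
The plan is to exploit the normalization identity \eqref{eq:normalization}, which says that $\bQ^\dagger(\bp)$ is column-stochastic, together with the positivity hypothesis, to show that the Jacobian $\differential\bbf=\bQ^\dagger$ acts as a strict contraction on the tangent space $\cT$ at every interior point. The natural device is the Dobrushin/ergodicity coefficient: for a column-stochastic $\bQ^\dagger$ one has $\|\bQ^\dagger(\bp)\bdelta\|_1\le \tau(\bQ(\bp))\,\|\bdelta\|_1$ for every $\bdelta\in\cT=\{\bdelta:{\mathds 1}^\dagger\bdelta=0\}$, where $\tau(\bQ)=1-\min_{i,k}\sum_j\min(Q_{ij},Q_{kj})$. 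When all entries of $\bQ$ are strictly positive each inner sum is positive, so $\tau(\bQ(\bp))<1$ pointwise on $\cS_{n-1}^o$. First I would record this pointwise bound, and then use continuity of $\bp\mapsto\bQ(\bp)$ (guaranteed by $\bD$ being continuously differentiable) together with compactness of $K\subset\cS_{n-1}^o$ to obtain a uniform estimate $\sup_{\bp\in K}\tau(\bQ(\bp))=:1-\epsilon<1$.

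To pass from the Jacobian to $\bbf$ itself, I would integrate along the straight segment $\bp(s)=\bp^a+s(\bp^b-\bp^a)$, $s\in[0,1]$, which stays in $\cS_{n-1}$ by convexity (and in $K$ after replacing $K$ by its convex hull, still a compact subset of the interior). Since $\bp^b-\bp^a\in\cT$ and $\cT$ is invariant under $\differential\bbf$, the fundamental theorem of calculus gives $\bbf(\bp^b)-\bbf(\bp^a)=\int_0^1\bQ^\dagger(\bp(s))(\bp^b-\bp^a)\,\differential s$, and the triangle inequality for integrals together with the pointwise coefficient bound yields $\|\bbf(\bp^b)-\bbf(\bp^a)\|_1\le(1-\epsilon)\|\bp^b-\bp^a\|_1$. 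This is exactly strict $\ell_1$-contractivity on $K$, proving (a).

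For (b), uniqueness is immediate: two interior fixed points are joined by a segment lying in a compact subset of $\cS_{n-1}^o$, on which (a) forces their $\ell_1$-distance to shrink, hence to vanish. For global attractivity I would first note that on the closed simplex the entries of $\bQ$ are nonnegative limits of positive ones while the columns still sum to one, so $\tau(\bQ)\le1$ and $\bbf$ is (non-strictly) non-expansive on all of $\cS_{n-1}$; consequently $d_t:=\|\bp_t-\bp^*\|_1$ is non-increasing and converges to some $d_\infty\ge0$. Then I would run an $\omega$-limit argument: the limit set $\Omega$ of the orbit is nonempty, compact, forward-invariant under $\bbf$, and every point of it lies at distance exactly $d_\infty$ from $\bp^*$. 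If $d_\infty>0$ and some $\bar\bp\in\Omega$ were interior, the segment $[\bar\bp,\bp^*]$ would sit in a compact subset of the interior and (a) would give $\|\bbf(\bar\bp)-\bp^*\|_1<d_\infty$, contradicting invariance; hence $d_\infty>0$ forces $\Omega\subseteq\partial\cS_{n-1}$.

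The \emph{main obstacle} is precisely this last step: because strict contractivity is only local---the coefficient $\tau(\bQ(\bp))$ degrades to $1$ as $\bp$ approaches $\partial\cS_{n-1}$---I cannot invoke a global Banach argument and must instead rule out an $\omega$-limit set trapped on the boundary. Here I would invoke the standing assumption that $\bC$ is strongly connected and aperiodic, which makes the supports of the iterates spread and ensures that some finite power of $\bbf$ maps every point of $\cS_{n-1}$ into $\cS_{n-1}^o$; forward-invariance of $\Omega$ then contradicts $\Omega\subseteq\partial\cS_{n-1}$, so $d_\infty=0$ and $\bp_t\to\bp^*$. Making this penetration-into-the-interior step precise in the full generality of \eqref{bP3}, rather than only for the concrete kernels \eqref{eq:Gs} and \eqref{eq:GsDGF}, is the delicate point, and is where the aperiodicity hypothesis is genuinely used.
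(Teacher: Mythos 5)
Your part (a) is correct and is essentially the paper's own argument: the paper proves the same pointwise Dobrushin-type bound by hand (splitting $\bp^b-\bp^a=\balpha-\bbeta$ into disjointly supported nonnegative parts and using that all entries of $\bQ^\dagger\balpha$, $\bQ^\dagger\bbeta$ exceed $\epsilon\gamma/n$), then integrates along the straight segment exactly as you do. The uniqueness half of (b) is also fine.

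The genuine gap is in your global-attractivity argument, at the step you yourself flag as delicate: the claim that some finite iterate of $\bbf$ maps all of $\cS_{n-1}$ (boundary included) into $\cS_{n-1}^o$. This is not derivable from the theorem's hypotheses. The theorem is stated for the general class \eqref{eq:class}, where there is no single matrix $\bC$ whose strong connectivity and aperiodicity you can invoke; there are two stochastic matrices $\bC_0,\bC_1$ and a diagonal $\bD(\bp)$, and the only assumptions are positivity of $\bQ$ on the \emph{open} simplex and existence of an interior fixed point. Positivity of $\bQ$ in the interior says nothing direct about the map on the boundary (entries of $\bQ$ may degenerate to zero there), and whether mass spreads off a face depends on $\bD$ as much as on connectivity: if the relevant entries of $\bI-\bD(\bp)$ vanish on that face, the support can freeze. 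So your $\omega$-limit reduction ends at an unproven claim that would require importing hypotheses the theorem does not have.

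The paper closes this without any penetration-into-the-interior step, and the device is worth internalizing: one does not need strict contractivity on all of $\cS_{n-1}$, only that each application of $\bbf$ shrinks the distance to the interior fixed point $\bp^\star$ by a \emph{uniform} factor. For any $\bp^b\in\cS_{n-1}$, the segment $\bp(\lambda)=(1-\lambda)\bp^\star+\lambda\bp^b$ lies in $\cS_{n-1}^o$ for $\lambda<1$ (convex combination with an interior point), so the Jacobian is non-expansive along the whole segment; moreover, fixing $\epsilon_1>0$ with $\{\bp : \|\bp-\bp^\star\|_1\le\epsilon_1\}\subset\cS_{n-1}^o$ and $\epsilon_2>0$ with entries of $\bQ$ exceeding $\epsilon_2/n$ on that ball, the initial portion of the segment inside the ball, which is a fraction at least $\epsilon_1/\|\bp^b-\bp^\star\|_1\ge\epsilon_1/2$ of its length, contracts by $1-\epsilon_2$. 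Hence $\|\bbf(\bp^b)-\bp^\star\|_1\le(1-\epsilon_1\epsilon_2/2)\|\bp^b-\bp^\star\|_1$ for \emph{every} $\bp^b\in\cS_{n-1}$, and iteration gives geometric convergence with no aperiodicity assumption. (Incidentally, this same estimate instantly finishes your own $\omega$-limit argument: a limit point $\bar\bp$ on the boundary at distance $d_\infty>0$ would satisfy $\|\bbf(\bar\bp)-\bp^\star\|_1<d_\infty$, contradicting invariance, so $d_\infty=0$.)
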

\begin{proof} Consider two probability vectors $\bp^a$ and $\bp^b$ in $\cS_{n-1}^o$, and let $\balpha:=(\bp^b-\bp^a)_+$ be the vector with the positive entries of the difference $\bp^b-\bp^a$ and $\bbeta:=-(\bp^b-\bp^a)_-$ contain the entries that originally appear with negative sign, while setting the remaining entries to be zero in both cases. Thus,
\[
\bp^b-\bp^a = \balpha - \bbeta,
\]
but in this representation $\balpha$ and $\bbeta$ have non-negative entries and have no common support, i.e., $\balpha_{i}\bbeta_{i}=0$ as they are not simultaneously $\neq 0$. Since $\mathds{1}^\dagger (\bp^b-\bp^a)=0$, it follows that $\mathds{1}^\dagger \bbeta=\mathds{1}^\dagger \balpha$, hence,
\begin{equation}\nonumber
\|\bbeta\|_1=\|\balpha\|_1=:\gamma
\end{equation}
and
\begin{align*}
\|\bp^b-\bp^a\|_1&=\sum_i |p^b_i-p^a_i|\\
&=\|\bbeta-\balpha\|_1\\
&=\sum_i\beta_i+\sum_i \alpha_i \\
&=\|\bbeta\|_1+\|\balpha\|_1\\
&=2\gamma.
\end{align*}
Now consider a path $\bp(\lambda)=(1-\lambda)\bp^a+\lambda \bp^b$ for $\lambda\in[0,1]$ and consider comparing the distance between $\bp^b$ and $\bp^a$ to the length of the path
\[
\bq(\lambda)=\bPi(\bp(\lambda))^\dagger \bp(\lambda),\;\lambda\in[0,1].
\]
Clearly,
\[
\differential\bp(\lambda)=(\balpha-\bbeta)\differential\lambda,
\]
and thus
\begin{align*}
\int_{\lambda = 0}^1 \|\differential\bp(\lambda)\|_1 &= \int_0^1\|\balpha-\bbeta\|_1 \differential\lambda\\
&= \|\balpha-\bbeta\|_1 \int_0^1 \differential\lambda = \|\bp^b-\bp^a\|_1.
\end{align*}
The entries of $\bQ$ are bounded away from zero
in any compact subset of $\cS_{n-1}^o$, hence we can assume that they are greater than $\frac{\epsilon}{n}$ along the path, for some $\epsilon>0$ which may depend on the compact subset.
Then,
\begin{align}\nonumber
\int_{\lambda = 0}^1 \|\differential\bq(\lambda)\|_1 &=\int_0^1 \| \bQ(\bp(\lambda))^\dagger (\balpha-\bbeta) \|_1 \differential\lambda\\
&\hspace*{-20pt}\leq (1-\epsilon)\int_0^1 \left(\| \bQ(\bp(\lambda))^ \dagger \bbeta\|_1 +\| \bQ(\bp(\lambda))^\dagger \balpha) \|_1\right)\differential\lambda
\label{eq:ineq}\\
&=(1-\epsilon)\int_0^1 (\|\bbeta\|_1 + \|\balpha \|_1)\differential\lambda
\label{eq:eq}\\
&=(1-\epsilon) \|\bp^b-\bp^a\|_1.\nonumber
\end{align}
To see why the inequality \eqref{eq:ineq} holds, note that for each $\lambda$,
\[
\bv^\beta:=\bQ(\bp(\lambda))^\dagger \bbeta \mbox{ and } \bv^\alpha:=\bQ(\bp(\lambda))^\dagger \balpha
\]
are vectors with positive entries, while $\|\bv^\beta\|_1=\|\bbeta\|_1=\gamma$, and $\|\bv^\alpha\|_1=\|\balpha\|_1=\gamma$ since $\bQ$ is row stochastic. The entries of $\bv^\beta$ are strictly larger than $\frac{\epsilon}{n} \|\bbeta\|_1=\frac{\epsilon \gamma}{n}$  and, similarly, the entries of $\bv^\alpha$ are strictly larger than the same value, $\frac{\epsilon}{n} \|\balpha\|_1=\frac{\epsilon \gamma}{n}$. 
 Therefore,
\begin{align*}
\|\bv^\beta-\bv^\alpha\|_1&\leq \|\bv^\beta\|_1+\|\bv^\alpha\|_1 -
2\epsilon\gamma\\ % (\|\bbeta\|_1+\|\balpha\|_1 )\\
&=2\gamma(1-\epsilon),
%&\leq (1-\epsilon) (\|\bbeta\|_1+\|\balpha\|_1 ).
\end{align*}
%due to cancellations between respective entries of $\bv^\beta$ and $\bv^\alpha$.
%The fact that both $\bv^\beta_{i}$ and $\bv^\alpha_{i}$ have non-zero values for some $i$, implies strict inequality
%\[
%\|\bv^\beta-\bv^\alpha\|_1<\|\bv^\beta\|_1+\|\bv^\alpha\|_1,
%\] 
%which follows from the fact that $\bQ$ has strictly positive entries that are bounded away from $0$, by a small fixed value, throughout $\cS_{n-1}$. Then also \eqref{eq:oneQ} implies that
%$\| \bQ(\bp(\lambda))^\dagger \bbeta\|_1=\|\bbeta\|_1$ and, similarly, $\| \bQ(\bp(\lambda))^\dagger \balpha\|_1=\|\balpha\|_1$,
establishing the claimed inequality.
Finally, the metric property of $\|\cdot\|_1$ implies that
\[\|\bq(1)-\bq(0)\|_1\leq \int_{\lambda = 0}^1 \|\differential\bq(\lambda)\|_1,\]
where $\bq(1)=\bPi(\bp^b)^\dagger \bp^b$ and $\bq(0)=\bPi(\bp^a)^\dagger \bp^a$. Hence,
\[\|\bPi(\bp^b)^\dagger \bp^b-\bPi(\bp^a)^\dagger \bp^a\|_1\leq (1-\epsilon)\|\bp^b-\bp^a\|_1.
\]
This proves the first claim (part (a)). 

Now assuming that $\bbf$ has a fixed point $\bp^a$ in $\cS_{n-1}^o$, consider any other point $\bp^b\in\cS_{n-1}$ and the path $\bp(\lambda)=(1-\lambda)\bp^a+\lambda \bp^b$ for $\lambda\in[0,1]$ as before. Since $\bp^a$ is in the interior of $\cS_{n-1}$ there is an $\epsilon_1>0$ such that
%${\mathcal B}_{\ell_1}(\bp,\epsilon_1):=\{\bp\mid \|\bp-\bp^a\|_1\leq \epsilon_1\}$ 
${\mathcal B}_{\ell_1}(\bp,\epsilon_1):=\{\bp\in\cS_{n-1}\mid \|\bp-\bp^a\|_1\leq \epsilon_1\}$ is also in the interior of $\cS_{n-1}$. The elements of $\bQ(\bp)$ are greater than, $\frac{\epsilon_2}{n}$, for some $0<\epsilon_{2} <1$, in ${\mathcal B}_{\ell_1}(\bp,\epsilon_1)$. 
Split the path $\{\bp(\lambda)\mid \lambda\in[0,1]\}$ into two parts: $\{\bp(\lambda)\mid \lambda\in[0,\lambda_1]\}$ that is contained in ${\mathcal B}_{\ell_1}(\bp,\epsilon_1)$ and $\{\bp(\lambda)\mid \lambda\in[\lambda_1,1]\}$ that is not. The portion of the path that is in ${\mathcal B}_{\ell_1}(\bp,\epsilon_1)$ contracts when mapped via $\bbf$ by $1-\epsilon_2$, whereas the length of remaining is nonincreasing.
%When $\lambda_1=1$, we have
%	\[
%		\|\bbf(\bp^b)-\bbf(\bp^a)\|_1\le (1-\epsilon_1) \|\bp^b-\bp^a\|_1,
%	\]
%and when $\lambda_1<1$, we have
Thus,
	\begin{align*}
	\|\bbf(\bp^b)\hspace*{-2pt}-\hspace*{-2pt}\bbf(\bp^a)\|_1 &\le  \int_0^{\lambda_1} \|d\bq(\lambda)\|_1+ \int_{\lambda_1}^1 \|d\bq(\lambda)\|_1
	\\
	&\le (1-\epsilon_2)\int_0^{\lambda_1} \|d\bp(\lambda)\|_1+ \int_{\lambda_1}^1 \|d\bp(\lambda)\|_1
	\\
%	&\le (1-\epsilon_2)\epsilon_1 + \|\bp^b-\bp^a\|_1-\epsilon_1
%	\\
%	&\le(1-\epsilon_1\epsilon_2/2) \|\bp^b-\bp^a\|_1,
	&\le (1-\epsilon_2)\lambda_1 \|\bp^b-\bp^a\|_1+(1-\lambda_1)\|\bp^b-\bp^a\|_1
	\\
	&\le(1-\epsilon_2\lambda_1) \|\bp^b-\bp^a\|_1.
	\end{align*}
%where in the last inequality we have used the fact that $\|\bp^b-\bp^a\|_1\le 2$.
Finally, we notice that $1-\epsilon_2\lambda_1\le 1-\epsilon_2\epsilon_1/2$ since $\|\bp^b-\bp^a\|_1\le 2$.
In total, the $\ell_1$-distance between $\bp^a$ and the elements of the sequence $\bp^b$, $\bbf(\bp^b)$, $\bbf(\bbf(\bp^b))$, \ldots, reduces to zero exponentially fast with a rate of at least $1-\epsilon_2\epsilon_1/2$. This proves the second part (part (b)).
\end{proof}

\begin{remark}
We note that analogous results to Theorem \ref{thm1} for $\ell_1$-contractivity for monotone nonlinear compartmental continuous-time systems were proven in Como etal.\ \cite{como2015throughput,como2017resilient} (e.g., see \cite[Lemma 1]{como2015throughput}), and that similar ideas underlie the differential Finsler-Lyapunov framework of Forni and Sepulchre \cite{forni2014differential,forni2016differentially} as well as work on monotone and hierarchical systems \cite{coogan2017contractive,manchester2017existence,russo2013contraction}. While our approach in this paper is to derive conditions on the differential map $\bdelta \mapsto \bQ(\bp)^{\dagger}\bdelta$ so as to guarantee $\ell_{1}$-contractivity of the map $\bp \mapsto \bPi(\bp)^{\dagger}\bp$ on $\cS_{n-1}$, it would be interesting to investigate a discrete-time Finsler-Lyapunov function approach analogous to the continuous-time case known in the literature (see e.g., Theorem 1 in \cite{forni2014differential}). Thus, the objective would be to construct a Finsler-Lyapunov function $V(\bp,\bdelta) :  \cS_{n-1} \times \mathcal{T} \rightarrow \mathbb{R}_{\geq 0}$ for the augmented map
$\left(\begin{array}{cc}\bp,& \bdelta\end{array}\right)\mapsto \left(\begin{array}{cc}\bPi(\bp)^{\dagger}\bp,& \bQ(\bp)^{\dagger}\bdelta\end{array}\right)$
so as to guarantee $\ell_{1}$-contractivity of the map $\bp \mapsto \bPi(\bp)^{\dagger}\bp$.
\\
\end{remark}

\begin{remark}
If $\bbf$ is a general nonlinear map, the Jacobian matrix $\bQ$ may fail to be stochastic for two reasons. First,
the elements of $\bQ$ may fail to be non-negative. Second, the normalization \eqref{eq:normalization} may fail unless $\bPi$ has a particular structure, as for instance the one in \eqref{eq:class}. A simple example to demonstrate the failure of \eqref{eq:normalization} is
\[
\left(\begin{array}{c}p_1\\p_2\end{array}\right)\mapsto
\left(\begin{array}{cc}p_1 & p_2\\p_2 & p_1\end{array}\right)
\left(\begin{array}{c}p_1\\p_2\end{array}\right).
\]
For this example one can readily see that $\mathds{1}^\dagger \bQ^\dagger=2\mathds{1}^\dagger\neq \mathds{1}^\dagger$.\\
\end{remark}

\begin{remark} At times it is easy to ensure that $\bbf$ in Theorem \ref{thm1} has a fixed point in the interior of $\cS_{n-1}$.
For instance, if $\bC_0=\bI$ is the identity matrix, and since $\bC^\dagger_1(\bI-\bD)\bp=(\bI-\bD)\bp$ and $\bC_1$ corresponds to a simply connected aperiodic chain, $(\bI-\bD)\bp$ is the corresponding Frobenius-Perron eigenvector and therefore lies in the interior of $\cS_{n-1}$. Conclusions can be drawn for $\bp$, accordingly, depending on $\bD$.\\
\end{remark}

\begin{cor}\label{cor1}
Let $\bPi(\bp)$ be row-stochastic and differentiable in $\bp$, and suppose that the Jacobian of the map $\bbf(\cdot)$ in \eqref{eq:nonlinearevolution} has non-negative entries.
Then, $\bbf$ is contractive (but not necessarily strictly contractive) in the $\ell_1$-metric. 
\end{cor}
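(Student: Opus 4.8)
The plan is to reduce Corollary \ref{cor1} to the machinery already assembled in the proof of Theorem \ref{thm1}, but with the strict positivity of $\bQ$ relaxed to mere non-negativity and hence with $\epsilon = 0$. First I would observe that the hypotheses of the corollary guarantee precisely the two properties that the proof of Theorem \ref{thm1} exploits: the Jacobian $\bQ$ has non-negative entries (by assumption), and it satisfies the row-stochastic normalization ${\mathds 1}^\dagger \bQ^\dagger = {\mathds 1}^\dagger$, which was established in \eqref{eq:normalization} purely from the structure \eqref{eq:class} together with the differentiability of $\bPi(\bp)$. Thus $\bQ(\bp)^\dagger$ maps non-negative vectors to non-negative vectors while preserving the sum of entries, i.e.\ preserving the $\ell_1$-norm on the non-negative cone.

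Next I would repeat the path-integral argument verbatim up to the key inequality. Given $\bp^a, \bp^b \in \cS_{n-1}$, decompose $\bp^b - \bp^a = \balpha - \bbeta$ into its non-negative and non-positive parts with disjoint support, so that $\|\balpha\|_1 = \|\bbeta\|_1 = \gamma$ and $\|\bp^b - \bp^a\|_1 = 2\gamma$. Along the linear path $\bp(\lambda) = (1-\lambda)\bp^a + \lambda\bp^b$ the images $\bv^\beta := \bQ(\bp(\lambda))^\dagger \bbeta$ and $\bv^\alpha := \bQ(\bp(\lambda))^\dagger \balpha$ are non-negative (rather than strictly positive) with $\|\bv^\beta\|_1 = \|\bv^\alpha\|_1 = \gamma$. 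The elementary bound
\[
\|\bv^\beta - \bv^\alpha\|_1 \leq \|\bv^\beta\|_1 + \|\bv^\alpha\|_1 = 2\gamma
\]
is simply the triangle inequality, and it replaces the sharper estimate \eqref{eq:ineq} that required strict positivity; here no positive lower bound on the entries is available, so $\epsilon = 0$.

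Integrating over $\lambda \in [0,1]$ and invoking the metric property of $\|\cdot\|_1$ exactly as in the theorem then yields
\[
\|\bbf(\bp^b) - \bbf(\bp^a)\|_1 \leq \int_0^1 \|\differential\bq(\lambda)\|_1 \leq \|\bp^b - \bp^a\|_1,
\]
which is contractivity with constant one. I would then remark that strictness genuinely fails in general: the subtraction of the term $2\epsilon\gamma$ in the theorem's estimate relied on every entry of $\bv^\alpha, \bv^\beta$ exceeding $\epsilon\gamma/n$, and when $\bQ$ merely has a zero entry the two images can have disjoint support, saturating the triangle inequality. The only conceptual care needed—and the mildest obstacle—is to confirm that the normalization \eqref{eq:normalization} still holds under the corollary's weaker hypothesis that $\bPi(\bp)$ is row-stochastic and differentiable without the explicit affine form \eqref{bP3}; this follows because ${\mathds 1}^\dagger \bPi^\dagger = {\mathds 1}^\dagger$ identically in $\bp$ forces ${\mathds 1}^\dagger \partial_{p_j}\bPi^\dagger = 0$, so the second block of $\bQ^\dagger$ is again annihilated by ${\mathds 1}^\dagger$, and no further structure is required.
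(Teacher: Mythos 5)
Your core reduction is correct and is precisely the intended argument: under the structure \eqref{eq:class}, the normalization \eqref{eq:normalization} combined with the assumed entrywise non-negativity makes $\bQ(\bp)^\dagger$ column-stochastic, hence $\ell_1$-norm-preserving on non-negative vectors such as $\balpha$ and $\bbeta$, and rerunning the path-integral estimate of Theorem \ref{thm1} with $\epsilon=0$ gives $\|\bbf(\bp^b)-\bbf(\bp^a)\|_1\leq\|\bp^b-\bp^a\|_1$. Up to that point your write-up matches the paper's reasoning.

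Your closing paragraph, however, contains a genuine error, and it is not a harmless aside, since you rely on it to secure the normalization under your weakened reading of the hypotheses. You claim that ${\mathds 1}^\dagger\bPi(\bp)^\dagger={\mathds 1}^\dagger$ ``identically in $\bp$'' forces ${\mathds 1}^\dagger\partial_{p_j}\bPi^\dagger=0$. But row-stochasticity of $\bPi(\bp)$ is an identity only on $\cS_{n-1}$, a set with empty interior in $\mR^n$; it can be differentiated only along directions in $\cT$, whereas the coordinate directions $\partial/\partial p_j$ appearing in \eqref{eq:oneQ} are transverse to the simplex. This is exactly the pitfall flagged in the paper's Remark 2, whose example
\[
\bPi(\bp)^\dagger=\left(\begin{array}{cc}p_1 & p_2\\ p_2 & p_1\end{array}\right)
\]
is row-stochastic on $\cS_1$ (each row sums to $p_1+p_2=1$) and differentiable, has Jacobian matrix $\bQ(\bp)^\dagger=\left(\begin{array}{cc}2p_1 & 2p_2\\ 2p_2 & 2p_1\end{array}\right)$ with non-negative entries, and yet satisfies ${\mathds 1}^\dagger\bQ^\dagger=2{\mathds 1}^\dagger\neq{\mathds 1}^\dagger$. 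Worse, the conclusion of the corollary itself fails for this map: here $\bbf(\bp)=(p_1^2+p_2^2,\,2p_1p_2)^\dagger$, and taking $\bp^a=(1,0)^\dagger$, $\bp^b=(1-\epsilon,\epsilon)^\dagger$ gives $\|\bbf(\bp^b)-\bbf(\bp^a)\|_1=4\epsilon(1-\epsilon)>2\epsilon=\|\bp^b-\bp^a\|_1$ for $0<\epsilon<1/2$. So the hypotheses cannot be read with the affine form \eqref{bP3} dropped---that reading makes the statement false, not merely the proof incomplete. The corollary should be read as retaining the structure \eqref{eq:class}, under which \eqref{eq:normalization} is already established in the paper; with that reading, your final paragraph should be deleted and the remainder of your proof is complete.
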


Stronger statements that build on the theorem are stated next.
%These hold without any restriction as to the functional form of $\bPi(\bp)$ other than being row-stochastic, differentiable, and satisfying the stated conditions on the Jacobian of corresponding maps.
 All results hold for functional forms that are more general than the exponential and linear models considered in this paper.\\

%\begin{cor}\label{cor2}
%Let matrix $[\Pi_{ij}(\bp)]_{ij=1}^n$ be row-stochastic and differentiable in $\bp$, and suppose that the Jacobian of the map $\bbf(\cdot)$ in \eqref{eq:nonlinearevolution} has non-negative entries. Further, suppose that for a suitable integer $m$, the differential of the $m$th iterant
%\begin{equation}\label{eq:iterant}
%\bbf^m(\bp):=\overbrace{\bbf (\bbf(\ldots \bbf}^m(\bp)))
%\end{equation}
%has strictly positive entries.
%Then, $\bbf$ has a unique fixed point. 
%\end{cor}
%
%\begin{proof} From Theorem \ref{thm1} we know that $\bbf^m(\cdot)$ is strictly contrative and therefore that it has a unique fixed point. From Corollary \ref{cor1} we know that $\bbf$ is contractive in $\ell_1$, and therefore, this same fixed point remains fixed point for $\bbf$.
%\end{proof}

\begin{prop}\label{cor3}
Let matrix $\bPi(\bp)$ be row-stochastic and continuously differentiable in $\bp$. Suppose $\bbf$ has a fixed point $\bp^{\star}$ in $\cS_{n-1}^o$, and for a suitable integer $m$ the differential (Jacobian) of the $m$th iterant
\begin{equation}\label{eq:iterant}
\bbf^m(\bp):=\overbrace{\bbf (\bbf(\ldots \bbf}^m(\bp)))
\end{equation}
has strictly positive entries for all $\bp\in\cS_{n-1}^o$.
Then, $\bp^{\star}$ is the unique fixed point of $\bbf$ and it is globally stable.
\end{prop}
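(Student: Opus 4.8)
The plan is to reduce Proposition~\ref{cor3} to Theorem~\ref{thm1} by applying the theorem to the iterate $\bbf^m$ rather than to $\bbf$ directly. First I would verify that $\bbf^m$ falls under the hypotheses of Theorem~\ref{thm1}. The footnote following \eqref{eq:normalization} already records that the normalization ${\mathds 1}^\dagger \bQ^\dagger = {\mathds 1}^\dagger$ is preserved under composition of maps with the structure in \eqref{eq:class}, so the Jacobian of $\bbf^m$ is column-normalized in the sense needed. By the chain rule, the Jacobian of $\bbf^m$ at $\bp$ is the product $\bQ^\dagger(\bbf^{m-1}(\bp))\cdots\bQ^\dagger(\bbf(\bp))\,\bQ^\dagger(\bp)$, and the hypothesis is precisely that this product has strictly positive entries for every $\bp\in\cS_{n-1}^o$. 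Thus Theorem~\ref{thm1}(a) applies to $\bbf^m$, giving that $\bbf^m$ is strictly contractive in $\ell_1$ on compact subsets of $\cS_{n-1}^o$.

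Next I would invoke Theorem~\ref{thm1}(b) for $\bbf^m$. The point $\bp^\star$ is a fixed point of $\bbf$ and hence also of $\bbf^m$, and it lies in $\cS_{n-1}^o$ by assumption. Therefore $\bp^\star$ is the \emph{unique} fixed point of $\bbf^m$ in $\cS_{n-1}^o$ and it is globally attracting for the iteration of $\bbf^m$: for any $\bp\in\cS_{n-1}$ the sequence $(\bbf^m)^k(\bp)\to\bp^\star$ exponentially fast as $k\to\infty$.

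It remains to transfer these conclusions from $\bbf^m$ back to $\bbf$. For uniqueness, any fixed point of $\bbf$ in $\cS_{n-1}^o$ is automatically a fixed point of $\bbf^m$, so it must equal $\bp^\star$; hence $\bp^\star$ is the only fixed point of $\bbf$ in the interior. For global stability, fix any $\bp\in\cS_{n-1}$ and write an arbitrary index $t$ as $t=km+j$ with $0\le j<m$. The orbit under $\bbf$ is the union of the $m$ subsequences $\{\bbf^{km+j}(\bp)\}_k$ for $j=0,\ldots,m-1$, and each such subsequence is the $\bbf^m$-orbit started at the point $\bbf^{j}(\bp)\in\cS_{n-1}$, which converges to $\bp^\star$ by the previous paragraph. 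Since finitely many subsequences each converge to $\bp^\star$, the full sequence $\bbf^t(\bp)\to\bp^\star$ as $t\to\infty$, establishing global stability.

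The main obstacle I anticipate is a regularity gap between the two theorems. Theorem~\ref{thm1} is stated for maps of the explicit form \eqref{eq:class}, and strictly speaking an iterate $\bbf^m$ need not present itself in that exact affine-in-$\bD(\bp)$ form; what is genuinely needed for the proof of Theorem~\ref{thm1} is only (i) continuous differentiability, (ii) the column-normalization \eqref{eq:normalization} of the Jacobian, and (iii) strict positivity of the Jacobian entries on compact subsets of the interior. Care is required to confirm that $\bbf^m$ inherits all three: continuous differentiability follows from composing $C^1$ maps, normalization follows from the footnote remark, and positivity is the stated hypothesis. A secondary subtlety is that the strict-contraction constant for $\bbf^m$ need not bound the one-step behavior of $\bbf$, so the convergence argument must be phrased along the $m$ subsequences rather than by claiming one-step contraction of $\bbf$ itself.
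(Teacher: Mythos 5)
Your proposal is correct and follows essentially the same route as the paper: note that $\bp^\star$ is a fixed point of $\bbf^m$, apply Theorem~\ref{thm1} to $\bbf^m$, and transfer uniqueness and global stability back to $\bbf$. The paper dispatches the transfer step with a one-line appeal to continuity of $\bbf$, whereas you spell it out via the $m$ subsequences $\{\bbf^{km+j}(\bp)\}_k$ and also verify explicitly (via the chain rule and the normalization footnote) that $\bbf^m$ satisfies the hypotheses of Theorem~\ref{thm1} --- a more careful rendering of the same argument.
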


\begin{proof}
By assumption, $\bp^{\star}$ is a fixed point of $\bbf^m$ since $\bbf^m(\bp^{\star}) = \bbf^{m-1}(\bp^{\star}) = \cdots = \bp^{\star}$. Now applying Theorem \ref{thm1} to $\bbf^m$ we conclude that $\bp^{\star}$ is the unique fixed point of $\bbf^m$ and is globally stable. Therefore, $\bp^{\star}$ is also a unique fixed point of $\bbf$, and the global stability of $\bp^{\star}$ with respect to $\bbf$ follows from the continuity of $\bbf$.
\end{proof}

%{\red
%\begin{proof}
%It is clear from Theorem \ref{thm1} that $\bbf^m$, which is stochastic and satisfies the conditions, has a unique attractive fixed point $\bp^\star$. 
%Hence, $\bbf^m(\bp^\star)=\bp^\star$.
%Suppose now that $\bp^\star$ fails to be a fixed point of $\bbf$ (since we no longer assume that $d\bbf$ has non-negative entries). In that case, consider the sequence of points $\{\bp^\star,\bbf(\bp^\star),\ldots,\bbf^{m-1}(\bp^\star)\}$. Clearly, since $\bbf(\bbf^{m-1}(\bp^\star))=\bp^\star$, this is an $m$-periodic orbit of $\bbf$ that includes $\bp^\star$. We claim that $\bbf(\bp^\star)=\bp^\star$. Suppose not, then
%$\bp_1:=\bbf(\bp^\star)$ is a fixed point for the map $\bbf^m$ since
%\begin{align*}
%\bbf^m(\bp_1)&=\bbf^{m+1}(\bp^\star)\\
%&=\bbf(\bbf^m(\bp^\star))\\
%&=\bbf(\bp^\star)=\bp_1.
%\end{align*}
%The result follows by noting that $\bbf^m$ has a unique fixed point.
%\end{proof}
%}

\begin{prop}\label{cor4}
Let matrix $\bPi(\bp)$ be row-stochastic and continuously differentiable in $\bp$. Suppose that $\bp^\star$ in $\cS_{n-1}^o$ is a fixed point of $\bbf$ in \eqref{eq:nonlinearevolution} and that, for a suitable integer $m$,  the $m$th power 
\[ \left( \differential\bbf |_{\bp^\star} \right)^m
\]
of the Jacobian of $\bbf$ evaluated at $\bp^\star$
has strictly positive entries.
Then $\bp^\star$ is a locally attractive equilibrium.
\end{prop}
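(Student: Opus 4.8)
The statement is the local analogue of Proposition~\ref{cor3}: there the Jacobian of $\bbf^m$ is positive throughout $\cS_{n-1}^o$ and one obtains global stability, whereas here positivity is only posited at the single point $\bp^\star$, so only a local conclusion can be expected. The plan is therefore to apply Theorem~\ref{thm1} to the iterate $\bbf^m$ after localizing its contraction argument to a neighborhood of $\bp^\star$. First I would use the chain rule: since $\bp^\star$ is a fixed point of $\bbf$, it is a fixed point of $\bbf^m$, and
\[
\differential\bbf^m|_{\bp^\star}=\left(\differential\bbf|_{\bp^\star}\right)^m=\left(\bQ(\bp^\star)^\dagger\right)^m=:\bQ_m^\dagger,
\]
which has strictly positive entries by hypothesis. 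Moreover the normalization ${\mathds 1}^\dagger\bQ_m^\dagger={\mathds 1}^\dagger$ is inherited by iterates of maps with the structure \eqref{eq:class} (the footnote to \eqref{eq:normalization}), so $\bQ_m$ is row-stochastic---exactly the property invoked in the contraction estimate of Theorem~\ref{thm1}.

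Because $\bD(\bp)$ is continuously differentiable, the Jacobian of $\bbf^m$ depends continuously on $\bp$, so its strict positivity at $\bp^\star$ persists on a closed $\ell_1$-ball $B:=\mathcal{B}_{\ell_1}(\bp^\star,\eta)\subset\cS_{n-1}^o$, on which its entries stay bounded below by $\tfrac{\epsilon}{n}$ for some $\epsilon>0$. As $B$ is convex and compact, for any $\bp^a,\bp^b\in B$ the segment $\bp(\lambda)=(1-\lambda)\bp^a+\lambda\bp^b$ remains in $B$; re-running the path argument of Theorem~\ref{thm1}(a) with $\bbf$ replaced by $\bbf^m$ (its differential $\bQ_m^\dagger$ being row-stochastic and bounded below by $\tfrac{\epsilon}{n}$ along the segment) then yields
\[
\|\bbf^m(\bp^b)-\bbf^m(\bp^a)\|_1\le(1-\epsilon)\,\|\bp^b-\bp^a\|_1,\qquad \bp^a,\bp^b\in B.
\]
This localization is the part I expect to require the most care: Theorem~\ref{thm1} is stated for positivity on all of $\cS_{n-1}^o$, so I must reproduce its estimate on $B$ alone and, crucially, take $\bp^a=\bp^\star$ to obtain $\bbf^m(B)\subseteq B$, ensuring that the iterates never leave the region of positivity.

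With forward invariance in hand, $\bbf^{mk}(\bp)\to\bp^\star$ geometrically and uniformly over $\bp\in B$. Passing from this period-$m$ subsequence to the full orbit is then routine: since $\bbf^j(\bp^\star)=\bp^\star$ lies in the interior of $B$ for each $j\in\{0,\ldots,m-1\}$, continuity of $\bbf$ supplies a neighborhood $V\subseteq B$ of $\bp^\star$ with $\bbf^j(V)\subseteq B$ for all these finitely many $j$. Writing $t=mk+j$ with $0\le j<m$, one has $\bbf^t(\bp)=\bbf^{mk}(\bbf^j(\bp))$ with $\bbf^j(\bp)\in B$, so $\|\bbf^t(\bp)-\bp^\star\|_1\le(1-\epsilon)^k\operatorname{diam}_{\ell_1}(B)\to0$; finiteness of the residue set makes the decay uniform in $j$, and hence the whole orbit from any $\bp\in V$ converges to $\bp^\star$, establishing local attractiveness. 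As an alternative to this $\ell_1$ route, one could argue spectrally: $\bQ_m^\dagger>0$ lets Perron--Frobenius pin the dominant eigenvalue of $\differential\bbf|_{\bp^\star}$ at $1$ (with left eigenvector ${\mathds 1}$) and simple, so its restriction to the invariant tangent space $\cT$ has spectral radius $<1$, giving local asymptotic stability by linearization.
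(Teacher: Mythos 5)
Your proposal is correct and follows essentially the same route as the paper's proof: chain rule at the fixed point, continuity to propagate strict positivity of $\differential\bbf^m$ to a neighborhood, a localized version of Theorem~\ref{thm1} applied to $\bbf^m$, and continuity of $\bbf$ to pass from the $m$-step subsequence to the full orbit. The paper compresses all of this into ``it is then clear that\ldots''; your write-up merely makes explicit the steps it leaves implicit---notably the forward invariance $\bbf^m(B)\subseteq B$ and the uniform handling of the residues $j\in\{0,\ldots,m-1\}$---which is a faithful, not a different, argument.
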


\begin{proof}
The expression $\left( \differential\bbf |_{\bp^\star} \right)^m$ is precisely the Jacobian of the $m^{\rm th}$ iterate, i.e.,
\[
\left( \differential\bbf |_{\bp^\star} \right)^m = \differential \overbrace{\,\bbf (\bbf(\ldots \bbf))}^m|_{\bp^\star}.
\]
By continuity, the entries of $\differential \bbf^m$, with $\bbf^m:= \overbrace{\,\bbf (\bbf(\ldots \bbf))}^m$, will remain positive in a neighborhood of $\bp^\star$.
It is then clear that $\bbf^m$, which is stochastic and satisfies the conditions of  Theorem \ref{thm1}, has $\bp^\star$ as a (locally) attractive fixed point. Therefore, using the continuity of $\bbf$, we conclude that $\bp^\star$ is a locally attractive fixed point for $\bbf$.
\end{proof}

%{\blue
%\begin{proof}
%The expression $\left( d\bbf |_{\bp^\star} \right)^m$ is precisely the Jacobian of the $m^{\rm th}$ iterate, i.e.,
%\[
%\left( d\bbf |_{\bp^\star} \right)^m = d \overbrace{\,\bbf (\bbf(\ldots \bbf))}^m|_{\bp^\star}.
%\]
%By continuity, the entries of $d \bbf^m$, with $\bbf^m:= \overbrace{\,\bbf (\bbf(\ldots \bbf))}^m$, will remain positive in a neighborhood $U$ of $\bp^\star$.
%It is clear that $\bbf^m$, which is stochastic and satisfies the conditions of  Theorem \ref{thm1}, has $\bp^\star$ as a (locally) attractive fixed point. We claim that  $\bp^\star$ is an attractive fixed point for $\bbf$ as well. Note that as part of our assumption $\bp^\star$ is already a fixed point of $\bbf$. But since we no longer assume that $d\bbf$ has non-negative entries, we need to argue is that it is indeed locally attractive. 
%This follows by continuity of $\bbf$. To see this, consider the $m$-cycle $\left\{\bp^\star,\bbf(\bp^\star),\ldots,\bbf^{m-1}(\bp^\star)\right\}$ {\em which is trivial}, since $\bp^\star=\bbf(\bp^\star)$ by assumption. Consider a ball 
%\[
%{\mathcal B}(\bp^*,\epsilon)=\{\bp\mid \|\bp^\star-\bp\|_1<\epsilon\}\subset U
%\]
%such that for all $\bp\in {\mathcal B}(\bp^*,\epsilon)$, it holds that $\bbf(\bp)\in U$, \ldots, $\bbf^{m-1}(\bp)\in U$.
%This is possible by continuity. By the contractiveness of $\bbf^m$, it follows that for any point $\bp\in{\mathcal B}(\bp^*,\epsilon)$, $\bbf^m(\bp)\in{\mathcal B}(\bp^*,\epsilon^\prime)$ with $\epsilon^\prime<\epsilon$, and hence, $\bbf^k(\bp)\to \bp^*$ as $k\to\infty$.
%\end{proof}
%}
\begin{prop}\label{cor5}
Let matrix $\bPi(\bp)$ be row-stochastic and continuously differentiable in $\bp$, and assume that $\bp^i$, for $i=0,1,2,\ldots,m-1$, is a periodic orbit for $\bbf$ in \eqref{eq:nonlinearevolution}, i.e., 
\[
\bp^{(i+1){\rm mod}(m)}=\bbf(\bp^{(i){\rm mod}(m)}).
\]
Suppose that the product of the Jacobians
\[ \left( \differential\bbf |_{\bp^{(i+m){\rm mod}(m)}} \right)\ldots \left( \differential\bbf |_{\bp^{(i){\rm mod}(m)}} \right)
\]
has strictly positive entries for some $i$.
Then, the periodic orbit is locally attractive.
\end{prop}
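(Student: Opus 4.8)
The plan is to reduce the periodic-orbit statement to the fixed-point situation already treated in Theorem~\ref{thm1} and Proposition~\ref{cor4}, applied to the $m$-fold composition $\bbf^m$. The first observation is that every point of the orbit is a fixed point of $\bbf^m$: since $\bp^{(i+1)\bmod m}=\bbf(\bp^{(i)\bmod m})$, iterating $\bbf$ a total of $m$ times returns each $\bp^{(i)\bmod m}$ to itself. Moreover, by the chain rule the product of Jacobians appearing in the hypothesis is precisely the Jacobian of $\bbf^m$ evaluated at that orbit point,
\[
\differential(\bbf^m)\big|_{\bp^{(i)\bmod m}} = \left(\differential\bbf\big|_{\bp^{(i+m-1)\bmod m}}\right)\cdots\left(\differential\bbf\big|_{\bp^{(i)\bmod m}}\right),
\]
so the assumed strict positivity is exactly positivity of the Jacobian of $\bbf^m$ at $\bp^{(i)\bmod m}$.

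Next I would check that $\bbf^m$ satisfies the hypotheses of Theorem~\ref{thm1} locally at $\bp^{(i)\bmod m}$. The composition $\bbf^m$ maps $\cS_{n-1}$ into itself and is continuously differentiable, and by the footnote following \eqref{eq:normalization} the normalization $\mathds{1}^\dagger\bQ^\dagger=\mathds{1}^\dagger$ is inherited by compositions of maps of the form \eqref{eq:class}; combined with the positivity just identified, this makes the Jacobian of $\bbf^m$ row-stochastic with strictly positive entries at $\bp^{(i)\bmod m}$, and hence, by continuity, in a neighborhood of it. The contraction argument of Theorem~\ref{thm1} (exactly as repurposed in Proposition~\ref{cor4}) then shows $\bbf^m$ is strictly $\ell_1$-contractive on such a neighborhood, so $\bp^{(i)\bmod m}$ is a locally attractive fixed point of $\bbf^m$.

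The final and most delicate step is to propagate attractiveness from the single base point $\bp^{(i)\bmod m}$ to the entire periodic orbit under $\bbf$, and I expect this to be the main obstacle: the hypothesis supplies positivity only for the cyclic product based at one index $i$, and the cyclic permutations of this product based at the other orbit points, while sharing its spectrum on $\cT$, need not themselves have positive entries, so one cannot simply re-apply Theorem~\ref{thm1} at every point. Instead I would route the attractiveness around the cycle by continuity. If $U$ is a neighborhood of $\bp^{(i)\bmod m}$ on which $\bbf^m$ contracts toward $\bp^{(i)\bmod m}$, then for each $j$ one can pick a neighborhood $U_j$ of $\bp^{(j)\bmod m}$ small enough that $\bbf^{(i-j)\bmod m}(U_j)\subset U$; since $\bp^{(i)\bmod m}=\bbf^{(i-j)\bmod m}(\bp^{(j)\bmod m})$, tracking a point of $U_j$ forward and then applying $\bbf^{(j-i)\bmod m}$ shows the subsequence $\bbf^{km}$ converges to $\bp^{(j)\bmod m}$, so each orbit point is locally attractive for $\bbf^m$. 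Finally, for any starting point near the orbit the intermediate iterates $\bbf^{km+r}$ converge to $\bp^{(r)\bmod m}$ by continuity of $\bbf^r$, so the full $\bbf$-trajectory approaches the periodic orbit; this is the natural analogue of the continuity argument that closes the proof of Proposition~\ref{cor4}.
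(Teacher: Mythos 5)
Your proposal is correct and follows essentially the same route as the paper's proof: view each orbit point as a fixed point of the $m$th iterant $\bbf^m$, identify the hypothesized positive product of Jacobians with $\differential(\bbf^m)$ at the base point via the chain rule, apply the Theorem~\ref{thm1}/Proposition~\ref{cor4} contraction argument there, and then propagate attractiveness around the orbit using continuity of $\bbf$. The only difference is one of rigor: the paper's one-line proof asserts that ``for any $i$'' the point $\bp^i$ is an attractive fixed point of $\bbf^m$, whereas you correctly observe that the cyclically permuted Jacobian products at the other orbit points need not be entrywise positive, and you supply the explicit continuity-routing argument (mapping a neighborhood of $\bp^j$ into the contraction neighborhood of $\bp^i$ and back) that justifies this claim---a step the paper compresses into ``follows from the continuity of $\bbf$.''
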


\begin{proof}
Under the stated condition, for any $i$, $\bp^i$ is a locally attractive fixed point for the $m$th iterant,
$\overbrace{\,\bbf (\bbf(\ldots \bbf))}^m|_{\bp^i}$. The fact that the orbit is locally attractive now follows from the continuity of $\bbf$.
%the fact that
%\[ \left( d\bbf |_{\bp^{(i+k){\rm mod}(m)}} \right)\ldots \left( d\bbf |_{\bp^{(i){\rm mod}(m)}} \right)
%\]
%has $\ell_1$-gain strictly less than $1$ for any $k\in\{1,2\ldots,m-1\}$.
\end{proof}

%In light of the potential applications of Theorem \ref{thm1} in the analysis of the stability of equilibria of a nonlinear Markov chain 
We provide next a bound on the induced $\ell_1$-incremental gain of stochastic maps in terms of the induced $\ell_1$-gain
of the Jacobian
\begin{align*}
\|\differential\bbf|_\cT\|_{(1)}&:= \max \{ \|\bQ^\dagger \bdelta\|_1 \mid \mathds{1}^\dagger \bdelta=0, \|\bdelta\|_1=1\}.
\end{align*}
This strengthens substantially the applicability of the framework since it relaxes the positivity requirement on the Jacobian, albeit this relaxed condition may be more challenging to verify globally.\\
%The constraint $\mathds{1}^\dagger \delta=0$ is due to the fact that the Jacobian acts on tangent vectors.
%\\

\begin{thm}\label{thm6} Let $\bbf$ be a differentiable stochastic map as in \eqref{eq:nonlinearevolution} and as before the Jacobian $\differential\bbf(\bp)|$ is represented by a matrix $\bQ(\bp)^\dagger$. For any $\bp^b,\bp^a\in\cS_{n-1}$,
\[
\| \bbf(\bp^b)-\bbf(\bp^a)\|_1 \leq \max_{\bp\in\cS_{n-1}} \|\differential\bbf(\bp)|_\cT\|_{(1)} \|\bp^b-\bp^a\|_1,
\]
and, in general,
\begin{align}\label{eq:dobrushin}
\|\differential\bbf|_\cT\|_{(1)}&= \frac{1}{2}\max_{j,k}\sum_{i=1}^n |(\bQ(\bp))_{ji}-(\bQ(\bp))_{ki}|.
\end{align}
\label{PropositionJacobian}
\end{thm}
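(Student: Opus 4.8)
The plan is to prove the two assertions separately, each by an argument that parallels the path-integration used in the proof of Theorem~\ref{thm1}. For the incremental-gain bound I would parametrize the straight segment $\bp(\lambda)=(1-\lambda)\bp^a+\lambda\bp^b$, $\lambda\in[0,1]$, which remains in $\cS_{n-1}$ by convexity and whose velocity $\bp^b-\bp^a$ lies in $\cT$ since $\mathds{1}^\dagger(\bp^b-\bp^a)=0$. Because $\bbf$ is differentiable and preserves the simplex, the image curve $\lambda\mapsto\bbf(\bp(\lambda))$ is differentiable with derivative $\bQ(\bp(\lambda))^\dagger(\bp^b-\bp^a)\in\cT$, so the fundamental theorem of calculus (valid as the integrand is continuous along the segment) gives $\bbf(\bp^b)-\bbf(\bp^a)=\int_0^1\bQ(\bp(\lambda))^\dagger(\bp^b-\bp^a)\,\differential\lambda$. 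Applying the triangle inequality for $\|\cdot\|_1$ under the integral and then bounding each integrand, by homogeneity of the induced norm on the tangent vector $\bp^b-\bp^a$, by $\|\differential\bbf(\bp(\lambda))|_\cT\|_{(1)}\|\bp^b-\bp^a\|_1\le\max_{\bp\in\cS_{n-1}}\|\differential\bbf(\bp)|_\cT\|_{(1)}\|\bp^b-\bp^a\|_1$ yields the claim after integrating in $\lambda$. The only points to check are that the maximum over the compact set $\cS_{n-1}$ is attained, which follows from continuity of $\bp\mapsto\|\differential\bbf(\bp)|_\cT\|_{(1)}$.

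For the identity \eqref{eq:dobrushin} I would show the induced norm equals the stated Dobrushin-type coefficient by matching upper and lower bounds; write $P:=\bQ(\bp)$ for brevity. For the upper bound, fix $\bdelta\in\cT$ with $\|\bdelta\|_1=1$ and decompose it as in Theorem~\ref{thm1} into $\bdelta=\balpha-\bbeta$ with $\balpha,\bbeta\ge 0$ of disjoint support and $\|\balpha\|_1=\|\bbeta\|_1=\tfrac12=:\gamma$. The key algebraic step is to insert the factors $\sum_k\beta_k/\gamma=1$ and $\sum_j\alpha_j/\gamma=1$ to rewrite the $i$th entry of $P^\dagger\bdelta$ as the weighted difference $(P^\dagger\bdelta)_i=\sum_{j,k}\tfrac{\alpha_j\beta_k}{\gamma}(P_{ji}-P_{ki})$; summing $|\cdot|$ over $i$ and pulling the double sum outside gives $\|P^\dagger\bdelta\|_1\le\sum_{j,k}\tfrac{\alpha_j\beta_k}{\gamma}\sum_i|P_{ji}-P_{ki}|\le\big(\max_{j,k}\sum_i|P_{ji}-P_{ki}|\big)\gamma$, i.e.\ at most $\tfrac12\max_{j,k}\sum_i|P_{ji}-P_{ki}|$. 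Notably this uses only $\mathds{1}^\dagger\bdelta=0$ (so that $\|\balpha\|_1=\|\bbeta\|_1=\gamma$) and needs neither non-negativity of $P=\bQ$ nor of $\bdelta$, which is exactly the relaxation this theorem advertises relative to Theorem~\ref{thm1}.

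For the matching lower bound I would exhibit an optimal direction: letting $(j^\star,k^\star)$ attain $\max_{j,k}\sum_i|P_{ji}-P_{ki}|$ and taking $\bdelta=\tfrac12(\be_{j^\star}-\be_{k^\star})$, one has $\mathds{1}^\dagger\bdelta=0$ and $\|\bdelta\|_1=1$, while $P^\dagger\bdelta=\tfrac12(P_{j^\star\cdot}-P_{k^\star\cdot})$ with $P_{j\cdot}$ the $j$th row of $P$, so that $\|P^\dagger\bdelta\|_1=\tfrac12\sum_i|P_{j^\star i}-P_{k^\star i}|$ meets the upper bound. I expect the only real subtlety, and hence the main obstacle, to be the upper-bound step: one must verify that the doubly-indexed rewriting is legitimate (it requires $\gamma\neq0$, guaranteed by $\|\bdelta\|_1=1$) and that it genuinely never invokes the sign of the entries of $\bQ$, since we are not assuming positivity of the Jacobian here. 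Once this bookkeeping is confirmed the two bounds coincide and the formula follows; I would also remark that combining \eqref{eq:dobrushin} with the first assertion recovers $\ell_1$-contractivity of $\bbf$ whenever $\max_{\bp\in\cS_{n-1}}\tfrac12\max_{j,k}\sum_i|\bQ_{ji}-\bQ_{ki}|<1$, thereby relaxing the global positivity hypothesis of Theorem~\ref{thm1}.
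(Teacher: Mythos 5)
Your proposal is correct. For the incremental-gain bound your argument is essentially the paper's: both parametrize the straight segment $\bp(\lambda)=(1-\lambda)\bp^a+\lambda\bp^b$, differentiate the image curve, and bound the integrand by $\max_{\bp\in\cS_{n-1}}\|\differential\bbf(\bp)|_\cT\|_{(1)}$; whether one phrases this as the fundamental theorem of calculus plus the triangle inequality (you) or as the metric property of $\|\cdot\|_1$ applied to the length of the image path (the paper) is immaterial.

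For the identity \eqref{eq:dobrushin} you take a genuinely different route. The paper first rewrites the induced norm as $\tfrac12\max\{\|\bQ^\dagger\bbeta-\bQ^\dagger\balpha\|_1 \mid \balpha,\bbeta\in\cS_{n-1}\}$ and then invokes convexity of $\bbeta\mapsto\|\bQ^\dagger\bbeta-\bm{\nu}\|_1$ to conclude that the maximum is attained at a pair of vertices of the simplex, i.e., at a pair of rows of $\bQ$. You instead prove matching bounds: an upper bound via the classical Dobrushin insertion trick, writing $(\bQ^\dagger\bdelta)_i=\sum_{j,k}\tfrac{\alpha_j\beta_k}{\gamma}(\bQ_{ji}-\bQ_{ki})$ and estimating termwise, and a lower bound via the explicit extremal direction $\bdelta=\tfrac12(\be_{j^\star}-\be_{k^\star})$. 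The two arguments are of comparable length, but yours has two advantages: it is entirely elementary (no appeal to maximization of a convex function over a polytope), and it makes explicit the achievability step that the paper leaves implicit --- strictly speaking, the paper's displayed equality \eqref{eq:l1induced} enlarges the feasible set from disjointly-supported pairs to all of $\cS_{n-1}\times\cS_{n-1}$, and only becomes an equality once one observes that the relaxed maximum is attained at a vertex pair, which is again a feasible unit tangent direction; your lower-bound construction supplies exactly this observation. You are also right, and it is worth emphasizing, that neither step uses nonnegativity of $\bQ$, which is precisely why this theorem relaxes the positivity hypothesis of Theorem \ref{thm1}. The only cosmetic caveat: both your use of the fundamental theorem of calculus and the attainment of the maximum over $\cS_{n-1}$ tacitly use continuity of $\bp\mapsto\bQ(\bp)$, i.e., continuous differentiability of $\bbf$; the paper's proof makes the same tacit assumption, and it holds for all maps of the form \eqref{eq:class} considered there.
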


\begin{proof} The first claim is straightforward since, with $\balpha,\bbeta$ as in the proof of Theorem \ref{thm1},
\begin{align*}
\| \bbf(\bp^b)-\bbf(\bp^a)\|_1 &\leq \int_{\lambda = 0}^1 \|\differential\bq(\lambda)\|_1\\
&=\int_0^1 \| \bQ(\bp(\lambda))^\dagger (\bbeta-\balpha) \|_1 \differential\lambda\\
&
\leq \left(\max_{\bp\in\cS_{n-1}} \|\differential\bbf(\bp)|_\cT\|_{(1)}\right)\int_0^1 \| \bbeta-\balpha \|_1 \differential\lambda\\
&=\left(\max_{\bp\in\cS_{n-1}} \|\differential\bbf(\bp)|_\cT\|_{(1)}\right) \|\bp^b-\bp^a\|_1.
\end{align*}
Any $\bdelta\in\cT$ with $\|\bdelta\|_1=1$ can be written as $\bdelta=\frac12(\bbeta-\balpha)$ with $\balpha,\bbeta$ having nonnegative entries and $\|\balpha\|_1=\|\bbeta\|_1=1$, as before, and at any given $\bp\in\cS_{n-1}$,
\begin{align}\nonumber
\|\differential\bbf|_\cT\|_{(1)}&=\max\{ \|\bQ^\dagger(\bp)\bdelta\|_1\mid
\bdelta\in\cT\}\\\label{eq:l1induced}
&=\frac12 \max\{ \|\bQ^\dagger\bbeta-\bQ^\dagger\balpha\|_1\mid
\balpha,\bbeta\in\cS_{n-1}\}.
\end{align}
The claim follows by convexity. To see this, note that $\|\bQ^\dagger \bbeta - {\bm \nu}\|_1$, for $\bm \nu$ constant, is a convex function of $\bbeta\in\cS_{n-1}$. Therefore the maximal value will be attained at an extreme point, i.e., a vertex, and likewise when maximizing with respect to $\balpha$. Thus, the extremal will be at a point where both $\bbeta$ and $\balpha$ have a single nonzero element (and thereby select a corresponding row of $\bQ$).\end{proof}

We note that the expression \eqref{eq:dobrushin} for the induced $\ell_1$-norm of linear maps is the so-called Markov-Dobrushin coefficient of ergodicity \cite{markov1906extension,dobrushin1956central,seneta2006markov,gaubert2015dobrushin}
that characterizes the contraction rate of Markov operators with respect to this norm (also, total variation).
For nonlinear operators on probability simplices (nonlinear Markov Chains), the same is true.
The above propositions provide candidate certificates for stability of equilibria $\bp^\star$ and highlight the fact that
the $\ell_1$-distance is a natural Finsler-Lyapunov function  in the sense of Forni and Sepulchre \cite{forni2014differential}.
The essence is that $\ell_1$-contractivity of the nonlinear dynamics $\bp_{\rm next}=\bbf(\bp)$, and stability of fixed points or periodic orbits, may be deduced from the infinitesimal properties of $\bbf$ in the $\ell_1$-metric. The approach is illustrated in the next sections.

\section{Exponential-influence models}\label{sec:Gs}
In this section we analyze the model in \eqref{eq:lazywalk} for the cases where the {\em reinforcement} function $r(x)$ is either
$1-e^{-\gamma x}$ or $e^{-\gamma x}$, for some $\gamma > 0$. The first choice satisfies $r(0)=0$ and $r^\prime(0)=\gamma$, and thereby strengthens the return probabilities\footnote{Naturally, the rates also depend on the choice of $\bC$.} for states with relatively large marginal probability at corresponding times $t$. The second choice has $r(0)=1$ and $r^\prime(0)=-\gamma$, has the tendency to do the opposite.

Throughout we assume that $\bC$ is an irreducible aperiodic row-stochastic matrix, and we denote by $\bc$ the unique (positive) Frobenius-Perron left eigenvector, i.e., $\bc$ satisfies
\[
\bC^\dagger \bc = \bc.
\]
It is normalized so that $\mathds{1}^\dagger \bc=1$ and, because of the irreducibility assumption, $\bc$ has positive entries.

\subsection{\bf Case $\mathbf{ r(x) = 1-e^{-\gamma x}}$ for $\mathbf{0<\gamma \leq 1}$.}
%{\red DELETE We begin by analyzing the case $\gamma=1$.
%We sum up our conclusions in the following proposition.}\\
\begin{prop}
\label{theorem4}\label{thm:thm4}
With $\bC$ as above and for any $\gamma\in(0,1]$ consider the map \begin{subequations}\label{eq:lazywalk1gamma}
\begin{align}
&\bp(t)\mapsto \bbf(\bp(t))=\bp(t+1),\mbox{ where}\\
&\bbf(\bp(t)) = \left(\diag(\mathds{1}-e^{-\gamma \bp(t)})+\bC^\dagger\diag(e^{-\gamma \bp(t)})\right) \bp(t).
\end{align}
\end{subequations}
%\begin{equation}\label{eq:lazywalk1}
%\bbf(\bp(t)) = \left(\diag(\mathds{1}-e^{-\gamma \bp(t)})+\bC^\dagger (e^{-\bp(t)})\right) \bp(t) .
%\end{equation}
The following hold:
\spacingset{.7}
\begin{itemize}
\item[i)] $\bbf(\cdot)$ is contractive in $\ell_1$,\\
\item[ii)] $\bbf$ has a unique fixed point $\bp^\star$ with entries satisfying
$
e^{-\gamma p^{\star}_{i}}p^{\star}_{i} = \kappa c_{i}$,
for some $\kappa>0$,\\
%\item[iii)] $\bp^\star$ is the unique fixed point of the map $\bbf(\cdot)$,\\
\item[iii)] starting from an arbitrary $\bp(0)\in\cS_{n-1}$,
\[\bp^\star=\lim_{t\to\infty}\bp(t).\]
\end{itemize}
\spacingset{1}
\end{prop}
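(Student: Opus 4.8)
The plan is to recognize \eqref{eq:lazywalk1gamma} as the special case of \eqref{eq:class} with $\bC_0=\bI$, $\bC_1=\bC$, and $\bD(\bp)=\diag(\mathds{1}-e^{-\gamma\bp})$, so that $\bI-\bD(\bp)=\diag(e^{-\gamma\bp})$, and then to read off each of the three claims from the machinery of Section~\ref{sec:stability}. For part (i) I would compute the Jacobian $\bQ^\dagger$ from \eqref{eq:oneQ}. Since $\bPi^\dagger=\bD+\bC^\dagger(\bI-\bD)$ and only the $k$th diagonal entry of $\bD$ depends on $p_k$, one gets $\frac{\partial\bPi^\dagger}{\partial p_k}\bp=\gamma e^{-\gamma p_k}p_k(\be_k-\bC^\dagger\be_k)$, whence the off-diagonal entries are $(\bQ^\dagger)_{jk}=e^{-\gamma p_k}C_{kj}(1-\gamma p_k)$ for $j\neq k$, while the diagonal entries are manifestly nonnegative sums. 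The decisive observation is that $(\bQ^\dagger)_{jk}\geq 0$ for all $\bp\in\cS_{n-1}$ precisely because $\gamma\leq 1$ forces $\gamma p_k\leq 1$; contractivity in $\ell_1$ then follows immediately from Corollary~\ref{cor1}.

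For part (ii) I would rewrite the fixed-point equation $\bbf(\bp^\star)=\bp^\star$ as $\bC^\dagger(\bI-\bD(\bp^\star))\bp^\star=(\bI-\bD(\bp^\star))\bp^\star$, i.e.\ $\bC^\dagger\bv=\bv$ with $v_i:=e^{-\gamma p^\star_i}p^\star_i$. By irreducibility the fixed space of $\bC^\dagger$ is spanned by $\bc$, so $\bv=\kappa\bc$ for a scalar $\kappa$, which is exactly the stated characterization, and positivity of $\bc$ and $\bp^\star$ forces $\kappa>0$. Existence and uniqueness of $\bp^\star$ I would obtain from a scalar monotonicity argument: the map $g(x)=xe^{-\gamma x}$ is strictly increasing on $[0,1]$ (its derivative $e^{-\gamma x}(1-\gamma x)$ is positive there since $\gamma\leq 1$), hence invertible, so that $h(\kappa):=\sum_i g^{-1}(\kappa c_i)$ is continuous and strictly increasing with $h(0^+)=0$ and with $h$ exceeding $1$ once some $\kappa c_i$ reaches $g(1)=e^{-\gamma}$. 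The intermediate value theorem then yields a unique admissible $\kappa^\star$ with $h(\kappa^\star)=1$ and all $p^\star_i>0$, i.e.\ a unique interior fixed point.

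For part (iii) — and for the uniqueness assertion — I would invoke the iterated-Jacobian Proposition~\ref{cor3}. The sign computation above shows that throughout $\cS_{n-1}^o$ the off-diagonal pattern of $\bQ^\dagger$ coincides with that of $\bC^\dagger$ while its diagonal is strictly positive; moreover $\bbf$ maps $\cS_{n-1}^o$ into itself, since $\bbf(\bp)=\bD(\bp)\bp+\bC^\dagger\bv$ with $\bv>0$ and each column of $\bC$ containing a positive entry by irreducibility. Consequently the chain-rule product $\bQ^\dagger(\bp^{(m-1)})\cdots\bQ^\dagger(\bp^{(0)})$ representing the Jacobian of $\bbf^m$ has, for $m$ large enough, strictly positive entries: each factor carries the sparsity of $\bC^\dagger$ together with self-loops from the positive diagonal, and primitivity of $\bC$ (strong connectivity plus aperiodicity) renders the product positive. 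Proposition~\ref{cor3} then gives that $\bp^\star$ is the unique fixed point and is globally attracting, and continuity of $\bbf$ extends the convergence to arbitrary $\bp(0)\in\cS_{n-1}$, including boundary points.

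The main obstacle I anticipate is precisely this last step: the single-step Jacobian $\bQ$ need not be strictly positive when $\bC$ has zero entries, so Theorem~\ref{thm1} cannot be applied directly to $\bbf$. The work lies in verifying forward-invariance of the interior and in arguing that the graph-theoretic product of the step-Jacobians becomes fully positive after finitely many steps, which is what licenses the passage to $\bbf^m$ in Proposition~\ref{cor3}.
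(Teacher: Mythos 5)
Your proposal is correct and follows essentially the same route as the paper: nonnegativity of the Jacobian $\bQ^\dagger=\diag(\mathds{1}-e^{-\gamma\bp}+\gamma\bp\odot e^{-\gamma\bp})+\bC^\dagger\diag(e^{-\gamma\bp}(\mathds{1}-\gamma\bp))$ for $\gamma\le 1$ plus Corollary~\ref{cor1} for (i), the Perron--Frobenius identification $e^{-\gamma\bp^\star}\odot\bp^\star=\kappa\bc$ with the monotonicity of $xe^{-\gamma x}$ on $[0,1]$ for (ii), and positivity of the iterated Jacobian (pattern of $\bC^\dagger$ plus strictly positive diagonal, hence primitive) feeding Proposition~\ref{cor3} for (iii). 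Your added care about forward-invariance of $\cS_{n-1}^o$ and the correct bound $g(1)=e^{-\gamma}$ (where the paper writes $1/(\gamma e)$) are welcome refinements, but the argument is the paper's own.
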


\begin{proof}
The Jacobian $\differential\bbf$ is of the form
\begin{align*}
\bdelta\mapsto &\left(\diag(\br(\bp) + \bp\odot \br^\prime(\bp)) \right.\\
&\left.+ \bC^\dagger (\bI-\diag(\br(\bp)+\bp\odot \br^\prime(\bp)))\right)\bdelta\\
&\hspace*{-20pt}= \underbrace{\left( \diag(\mathds{1} - e^{-\gamma \bp} + \gamma\bp\odot e^{-\gamma\bp}) + \bC^\dagger \diag(e^{-\gamma \bp} - \gamma \bp\odot e^{-\gamma \bp}) \right)}_{\bQ(\bp)^\dagger}\bdelta,
\end{align*}
where $\odot$ denotes the entry-wise multiplication of vectors, and for a vector $\bv=(v_{i})_{i=1}^n$, $e^{\bv}$ denotes the vector with entries $e^{v_{i}}$. Since both functions $1-e^{-\gamma x}+\gamma xe^{-\gamma x}$ and $e^{-\gamma x}-\gamma xe^{-\gamma x}$ take non-negative values on $[0,1]$,
$\bQ(\bp)^\dagger$ is a (column) stochastic matrix. Thus, $\bbf$ is contractive. 

Any fixed point of $\bbf$ must satisfy
\begin{equation}\label{eq:fixedpoint}
\bp=\left(\diag(1-e^{-\gamma \bp}) + \bC^\dagger (e^{-\gamma \bp}))\right) \bp.
\end{equation}
Rearranging terms we see that $\bp e^{-\gamma \bp}$ is proportional to $\bc$ (the Frobenius-Perron vector of $\bC$), and therefore,
\begin{equation}\label{eq:system}
p_{i}e^{-\gamma p_{i}}=\kappa c_{i}, \; i=1,\ldots,n.
\end{equation}
The function $xe^{-\gamma x}$ is monotonic on $[0,1]$ and hence, for any 
\[
\kappa \leq \frac{1/(\gamma e)}{\displaystyle\max_{i}\{c_{i}\}}=:\kappa_{\max},
\]
there is a unique solution $\{p_{i}\mid i=1,\ldots,n\}$ of \eqref{eq:system}.
Let now
$
s(\kappa):=\sum_i p_{i}.
$
The function $s(\kappa)$ is monotonically increasing as a function of $\kappa$ and has $s(0)=0$. For
$\kappa=\kappa_{\max}$ one of the $p_{i}$'s is equal to $1$ and hence $s(\kappa_{\max})\geq 1$. Thus,
the equation $s(\kappa)=1$ has a unique solution that corresponds to the probability vector $\bp^\star$ that satisfies \eqref{eq:fixedpoint}. Thus the fixed point $\bp^\star$ is unique.

Further, $\bQ$ inherits irreducibility from $\bC^\dagger$ in $\cS_{n-1}^o$ since it has the same pattern of positive entries; in addition it is aperiodic, irrespective of $\bC$, because its diagonal is not zero. Hence, independently of $\bp$, there exists integer $m$ such that
\begin{equation}\label{eq:differential_fff}
\bQ(\,\overbrace{\bbf(\ldots \bbf(\bp))}^{m-1}\,)^\dagger\ldots \bQ(\bbf(\bp))^\dagger \bQ(\bp)^\dagger
\end{equation}
has all entries positive. The expression in \eqref{eq:differential_fff} is precisely the differential of the $m$th iterant (cf.\ \eqref{eq:iterant}). 
%Hence, the $m$th iterant is strictly contractive. 
By Proposition \ref{cor3}, $\bp^\star$ is globally attractive.  
\end{proof}

%The theorem holds verbatim for $\gamma\leq 1$. The essence is the same. Again $\differential\bbf^m(\cdot)$ is strictly contractive and the fixed-point equations admit a unique solution. We summarize.
%
%\begin{prop}\label{theorem5}
%For any $\gamma \in[0,1]$ consider
%\begin{subequations}\label{eq:lazywalk1gamma}
%\begin{align}
%&\bp(t)\mapsto \bbf(\bp(t))=\bp(t+1),\mbox{ where}\\
%&\bbf(\bp(t)) = \left(\diag(\mathds{1}-e^{-\gamma \bp(t)})+\bC^\dagger\diag(e^{-\gamma \bp(t)})\right) \bp(t).
%\end{align}
%\end{subequations}
%The map $\bbf$ is contractive in $\ell_1$ and, starting from an arbitrary $\bp(0)\in\cS_{n-1}$, the limit $\bp^\star=\lim_{t\to\infty}\bp(t)$ exists, is unique, and its entries satisfy
%$
%e^{-\gamma p^{\star}_{i}}p^{\star}_{i} = \kappa c_{i},
%$
%for some $\kappa>0$.
%\end{prop}

\begin{remark}\label{RemarkFromExp2DF}
More in the style of DeGroot-Friedkin models \cite{degroot1974reaching,friedkin2010attitude} of the general form (\ref{eq:DeGrootFriedkin}), one may consider a model
\begin{equation}\label{eq:lazywalkFP}
\bp(t+1) = \left[\diag(\mathds{1}-e^{-\gamma \bp(t)})+\bC^\dagger\diag(e^{-\gamma \bp(t)}))\right]_{\rm FP}.
\end{equation}
Then, $\diag(\mathds{1}-e^{-\gamma\bp(t)})+\bC^\dagger\diag(e^{-\gamma\bp(t)}))$ is irreducible and therefore, an alternative formula for $\bp(t+1)$ is
\[
\bp(t+1)=\lim_{m\rightarrow\infty} \left(\diag(\mathds{1}-e^{-\gamma\bp(t)})+\bC^\dagger \diag(e^{-\gamma\bp(t)}))\right)^{m} \bp(t).
\]
Comparing with \eqref{eq:lazywalk1gamma}, the fixed point $\bp^\star$ in Proposition \ref{thm:thm4} is also a fixed point for \eqref{eq:lazywalkFP}.
%Thus, 
%It is easy to show that there is a unique fixed point and that this coincides with that above. The stationarity conditions are exactly the same. Contraction follows similarly by noting that the pertinent stochastic matrix is irreducible and the iteration in \eqref{eq:lazywalkFP} is equivalent to taking the limit
% \hfill $\Box$
\end{remark}

\subsection{\bf Case $\mathbf{r(x)=1-e^{-\gamma x}}$ for $\mathbf{\gamma> 1}$.}

The case $\gamma>1$ is substantially different. Here, there can be several attractive points of equilibrium for the nonlinear dynamics in \eqref{eq:lazywalk1gamma} and even more complicated nonlinear behavior. 
In fact, we suggest that {\em such a behavior may be more appropriate for models of opinion dynamics} as it is reasonable to expect a different outcome depending on the starting point (that encapsulates confidence/beliefs of individuals). We illustrate the behavior with two numerical examples for $3$-state Markov chains to highlight differences with the case when $\gamma\leq 1$.
\subsubsection{Example}
We consider the dynamics in \eqref{eq:lazywalk1gamma}
for a $3$-state Markov chain (i.e., $n=3$) with $\gamma=4$ and
\begin{eqnarray}
\bC=\left[\begin{matrix}0.8 & 0.1 & 0.1\\0.4 & 0.2 & 0.4\\ 0.4 & 0.4 & 0.2\end{matrix}\right].
\label{CmatrixExampleSec3B1}
\end{eqnarray}
The left Frobenius-Perron eigenvector of $\bC$ is $(2/3,1/6,1/6)^\dagger$. The fixed-point conditions for possible stationary distributions become
\begin{eqnarray*}
&&e^{-4p^{\star}_{1}}p^{\star}_{1} = \kappa \frac{2}{3},\\
&&e^{-4p^{\star}_{2}}p^{\star}_{2} = \kappa \frac{1}{6},\\
&&2p^{\star}_{2}+ p^{\star}_{1} = 1.
\end{eqnarray*}
Upon eliminating $\kappa$ between the first two, and substituting $p_1$ in terms of $p_2$, we obtain
\begin{eqnarray}
\frac{1 - 2 p^{\star}_{2}}{p^{\star}_{2}} e^{-4(1 - 3 p^{\star}_{2})} = 4.
\end{eqnarray}
This equation has the unique solution
\[
\bp^{\star}:=(0.9904,0.0048,0.0048)^\dagger.
\]
It turns out that this is a locally attractive fixed point. This can be verified by evaluating the Jacobian of $\bbf$ at $\bp^{\star}$ as
\[
\differential\bbf|_{\bp^\star} = \left[\begin{matrix}1.0113 & 0.3849 & 0.3849\\-0.0056 & 0.2303 & 0.3849\\ -0.0056 & 0.3849 & 0.2303\end{matrix}\right].
\]
Even though the Jacobian has negative entries it is still strictly contractive. Indeed, we explicitly evaluate the induced gain using Theorem \ref{PropositionJacobian} and this is
\begin{eqnarray*}
\|\differential\bbf|_\cT\|_{(1)}= \frac{1}{2} \max\{1.2528, 1.2528, 0.3092\} = 0.6264 < 1.
\end{eqnarray*}
Thus $\bp^{\star}$ is a stable fixed point. This analysis is consistent with simulations shown in Fig. \ref{gamma4n3_example1}. In the figure we depict trajectories (in different color) starting from random initial conditions that clearly tend to $\bp^{\star}$. 
\begin{center}
\begin{figure}[htb]
%\vspace*{-0.31in}
  \centering
    \includegraphics[width=0.85\linewidth]{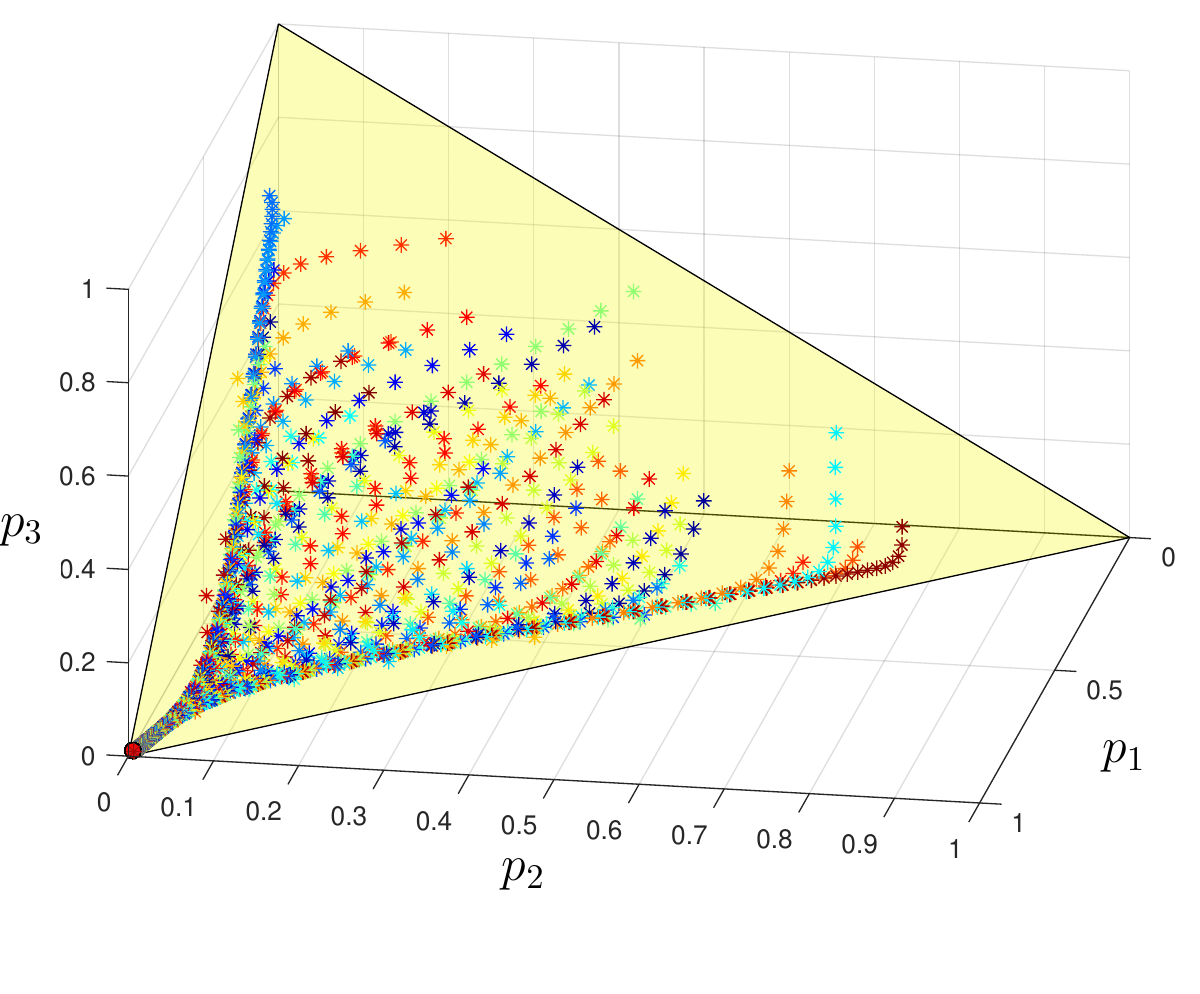}
%    \vspace*{-0.25in}
    \caption{Convergence of trajectories to a unique fixed point for the $3$-state exponential model \eqref{eq:lazywalk1gamma} with $\gamma = 4$ and influence matrix $\bC$ given by (\ref{CmatrixExampleSec3B1}).}
    \label{gamma4n3_example1}
    %\vspace*{-0.25in}
\end{figure} 
\end{center}

\subsubsection{Example}
Once again we consider a $3$-state Markov chain with $\gamma=4$, but this time we take
\begin{eqnarray}
\bC=\left[\begin{matrix}0 & 0.5 & 0.5\\0.5 & 0 & 0.5\\ 0.5 & 0.5 & 0\end{matrix}\right].
\label{CmatrixExampleSec3B2}
\end{eqnarray}
The fixed-point equations have $7$ solutions (taking into account symmetries). Out of those, three are attractive fixed points
with coordinates cyclically selected from $\{1-a,a/2,a/2\}$ for $a = 0.046$. The remaining four
are unstable fixed points. One is at the center $(1/3,1/3,1/3)^\dagger$ (due to symmetry), and the rest have coordinates cyclically selected from $\{1-a,a/2,a/2\}$ for $a = 0.874$. Just like the previous example, we can verify stability by computing the Jacobian $\differential\bbf$ at fixed points. For instance, for the fixed point $\bp^{\star}_{a} = (0.954,0.023,0.023)^\dagger$, we have
\[
\differential\bbf |_{\bp^{\star}_{a}} =\left[\begin{matrix}1.0620 & 0.4141 & 0.4141\\-0.0310 & 0.1718 & 0.4141\\ -0.0310 & 0.4141 & 0.1718\end{matrix}\right],
\]
and  
\begin{eqnarray*}
\|\differential\bbf|_\cT\|_{(1)}= \frac{1}{2} \max\{1.2958, 1.2958, 0.4846\} = 0.6479 < 1.
\end{eqnarray*}
Applying Theorem \ref{PropositionJacobian}, we conclude that $\bp^{\star}_{a}$ is a stable fixed point. For another fixed point $\bp^{\star}_{b} = (0.1260,0.4370,0.4370)^\dagger$, we have
\[
\differential\bbf |_{\bp^{\star}_{b}} =\left[\begin{matrix}0.7004 & -0.0651 & -0.0651\\0.1498 & 1.1302 & -0.0651\\ 0.1498 & -0.0651 & 1.1302\end{matrix}\right],
\]
and
\begin{eqnarray*}
\|\differential\bbf|_\cT\|_{(1)}= \frac{1}{2} \max\{1.9608, 1.9608, 2.3907\} = 1.1954 > 1.
\end{eqnarray*}
Numerical evidence shown in Fig. \ref{gamma4n3_example2} confirms that $\bp^{\star}_{a}$ is stable and $\bp^{\star}_{b}$ is unstable. Convergence of trajectories depends on the initial conditions with respect to the basins of attraction for the three stable fixed points. The qualitative behavior of the trajectories around the four unstable and three stable fixed points is illustrated in Fig. \ref{ExpGamSevenFixedPoints}. 
\begin{center}
\begin{figure}[ht]
%\vspace*{-0.31in}
  \centering
    \includegraphics[width=0.85\linewidth]{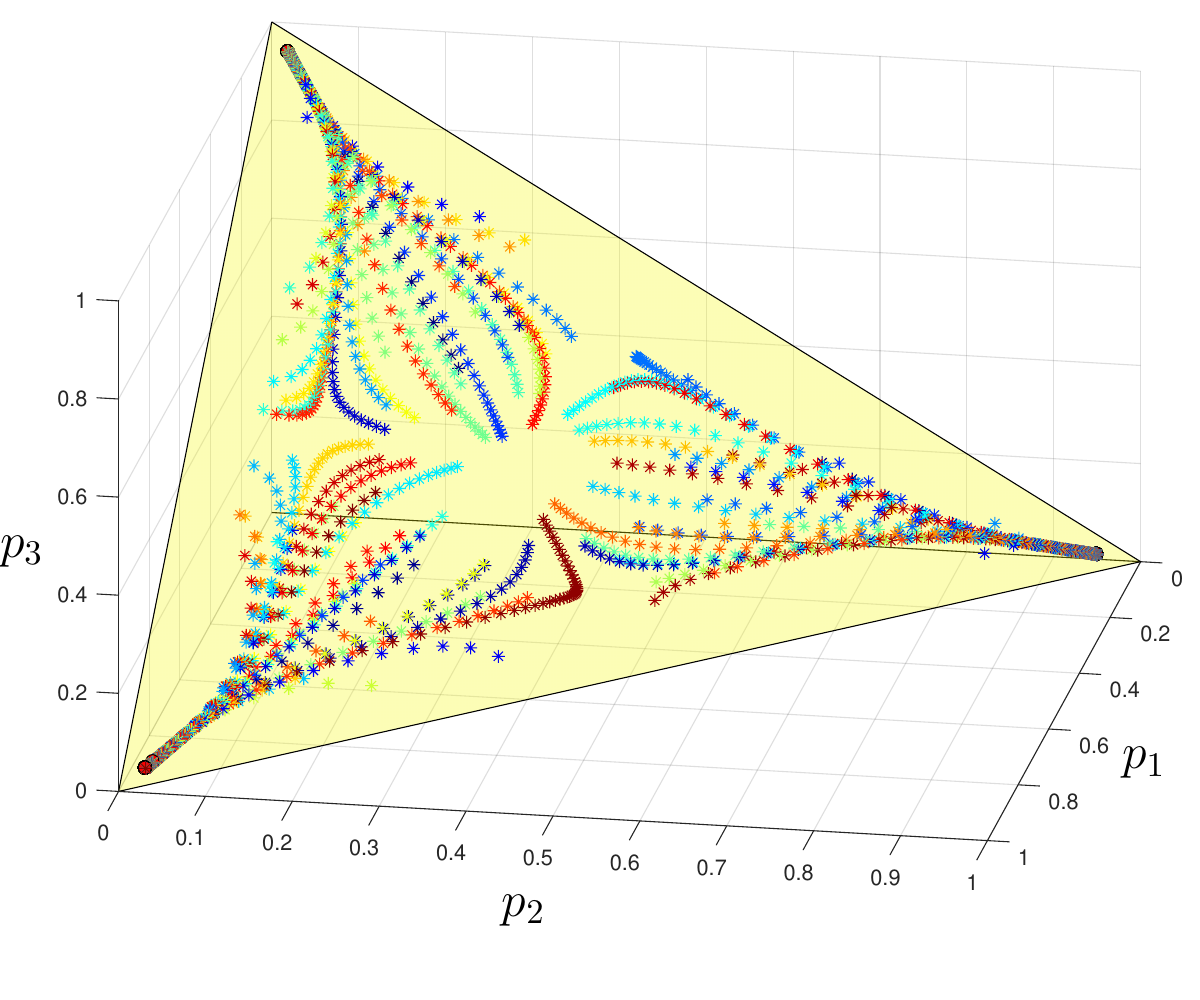}
%    \vspace*{-0.25in}
    \caption{For the $3$-state exponential model \eqref{eq:lazywalk1gamma} with $\gamma = 4$ and influence matrix $\bC$ given by (\ref{CmatrixExampleSec3B2}), trajectories converge to one of the three stable fixed points.}
    \label{gamma4n3_example2}
    %\vspace*{-0.25in}
\end{figure} 
\end{center}
\begin{center}
\begin{figure}[ht]
%\vspace*{-0.31in}
  \centering
    \includegraphics[width=0.7\linewidth]{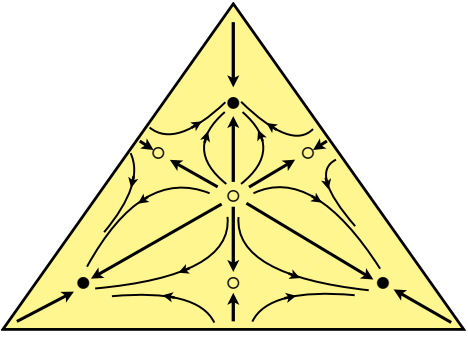}
%    \vspace*{-0.25in}
    \caption{The qualitative behavior of dynamics \eqref{eq:lazywalk1gamma} with $\gamma >1$ as observed in Fig. \ref{gamma4n3_example2}, where three stable fixed points (solid circles) and four unstable fixed points (empty circles) coexist on the simplex.}
    \label{ExpGamSevenFixedPoints}
    %\vspace*{-0.25in}
\end{figure} 
\end{center}

\subsection{\bf Case $\mathbf{r(x)=e^{-\gamma x}}$ for $\mathbf{\gamma\leq 1}$.}
In this case there is a unique fixed point and it is always globally attractive.
We summarize our conclusions as follows:
\begin{prop} \label{repellingresult}
For any $\gamma \in[0,1]$ consider
\begin{subequations}\label{repellingwalk}
\begin{align}
&\bp(t)\mapsto \bbf(\bp(t))=\bp(t+1),\mbox{ where}\\
&\bbf(\bp(t)) = \left(\diag(e^{-\gamma\bp(t)})+\bC^\dagger \diag(\mathds{1}-e^{-\gamma\bp(t)})\right) \bp(t).\end{align}
\end{subequations}
The map $\bbf$ is contractive in $\ell_1$ and, starting from an arbitrary $\bp(0)\in\cS_{n-1}$, the limit $\bp^\star=\lim_{t\to\infty}\bp(t)$ exists, is unique, and its entries satisfy
$
\left(1-e^{-\gamma p^{\star}_{i}} \right) p^{\star}_{i}=\kappa c_{i},
$
for some $\kappa>0$.
%The map $\bp(t)\mapsto\bp(t+1)$ given by
%\begin{equation}
%\label{repellingwalk}
%\bp(t+1)=\left(\diag(e^{-\gamma\bp(t)})+\bC^\dagger \diag(1-e^{-\gamma\bp(t)})\right) \bp(t)=:\bbf(\bp(t)).
%\end{equation}
%is contractive for $\gamma\leq 1$, and the iteration for $t=0,1,\dots$ leads to a unique fixed point $\bp^{\star}=\lim_{t\to\infty}\bp(t)$, which satisfies
%\begin{align*}
%\left(1-e^{-\gamma p^{\star}_{i}} \right) p^{\star}_{i}=\kappa c_{i},
%\end{align*}
%for some $\kappa>0$.
\end{prop}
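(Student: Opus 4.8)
The plan is to mirror the proof of Proposition~\ref{thm:thm4}, since the present map is built from the ``opposite'' reinforcement $r(x)=e^{-\gamma x}$. First I would compute the Jacobian of $\bbf$ via the general formula $\bQ(\bp)^\dagger = \diag(\br(\bp)+\bp\odot\br^\prime(\bp)) + \bC^\dagger(\bI - \diag(\br(\bp)+\bp\odot\br^\prime(\bp)))$ with $\br(\bp)=e^{-\gamma\bp}$, so that the relevant diagonal factor is $r(x)+xr^\prime(x)=e^{-\gamma x}(1-\gamma x)$. To invoke Corollary~\ref{cor1} it suffices to check that both $e^{-\gamma x}(1-\gamma x)$ and its complement $1-e^{-\gamma x}(1-\gamma x)$ are nonnegative on $[0,1]$. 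The first is nonnegative precisely because $\gamma\le 1$ forces $\gamma x\le 1$; for the second I would set $g(x)=e^{-\gamma x}(1-\gamma x)$, note $g(0)=1$ and $g^\prime(x)=-\gamma e^{-\gamma x}(2-\gamma x)\le 0$ on $[0,1]$, so $g(x)\le 1$. Hence $\bQ(\bp)^\dagger$ is column stochastic and $\bbf$ is $\ell_1$-contractive, giving item~(i).

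For the fixed-point form, I would substitute a fixed point into the defining equation and cancel $\diag(e^{-\gamma\bp})\bp$ from both sides, reducing the condition to $\diag(\mathds{1}-e^{-\gamma\bp^\star})\bp^\star = \bC^\dagger\diag(\mathds{1}-e^{-\gamma\bp^\star})\bp^\star$. Writing $\bw:=\diag(\mathds{1}-e^{-\gamma\bp^\star})\bp^\star$, this says $\bw=\bC^\dagger\bw$, so by irreducibility $\bw$ is a positive multiple of the Frobenius--Perron vector $\bc$; that is, $(1-e^{-\gamma p^\star_i})p^\star_i=\kappa c_i$ for some $\kappa>0$, which is the asserted characterization.

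Existence and uniqueness of $\bp^\star$ then reduce to a scalar monotonicity argument, exactly as in Proposition~\ref{thm:thm4}. The map $h(x):=(1-e^{-\gamma x})x$ has $h(0)=0$ and $h^\prime(x)=(1-e^{-\gamma x})+\gamma x e^{-\gamma x}>0$ for $x>0$, so it is a strictly increasing bijection of $[0,1]$ onto $[0,1-e^{-\gamma}]$. Setting $p_i=h^{-1}(\kappa c_i)$ and $s(\kappa):=\sum_i h^{-1}(\kappa c_i)$, the function $s$ is continuous, strictly increasing, with $s(0)=0$ and $s(\kappa_{\max})\ge 1$ for $\kappa_{\max}:=(1-e^{-\gamma})/\max_i c_i$ (since the maximal coordinate reaches $1$ there). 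Thus $s(\kappa)=1$ has a unique root $\kappa^\star\in(0,\kappa_{\max}]$, and since $\kappa^\star c_i>0$ the resulting $\bp^\star$ lies in $\cS_{n-1}^o$.

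Finally, global attractiveness follows from Proposition~\ref{cor3}. In the interior the diagonal of $\bQ(\bp)^\dagger$, namely $e^{-\gamma p_i}(1-\gamma p_i)$, is strictly positive (since $\gamma p_i<1$ there), and the off-diagonal entries carry the nonzero pattern of $\bC^\dagger$ scaled by the positive factor $1-e^{-\gamma p_i}(1-\gamma p_i)$; hence $\bQ$ is irreducible and aperiodic with a pattern independent of $\bp$. Because $\bbf$ maps $\cS_{n-1}^o$ into itself (each coordinate $e^{-\gamma p_j}p_j+\sum_i C_{ij}(1-e^{-\gamma p_i})p_i>0$), the chain-rule product representing $\differential\bbf^m$ is, for $m$ equal to the index of primitivity of that pattern, a product of matrices all sharing it, hence entrywise positive throughout $\cS_{n-1}^o$. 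Proposition~\ref{cor3} then gives that $\bp^\star$ is the unique fixed point and globally attractive, establishing the limit claim. I expect the only delicate point to be the sign verification in the first paragraph, where the hypothesis $\gamma\le 1$ is exactly what keeps both diagonal factors inside $[0,1]$.
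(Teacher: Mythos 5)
Your proposal is correct and follows essentially the same route as the paper's proof: compute $\bQ(\bp)^\dagger=\diag(e^{-\gamma\bp}-\gamma\bp\odot e^{-\gamma\bp})+\bC^\dagger\diag(\mathds{1}-e^{-\gamma\bp}+\gamma\bp\odot e^{-\gamma\bp})$, observe it is column stochastic for $\gamma\le 1$ and invoke Corollary~\ref{cor1}, reduce the fixed-point equation to $(1-e^{-\gamma p^\star_i})p^\star_i=\kappa c_i$, and settle uniqueness and global attractiveness by the monotonicity argument and Proposition~\ref{cor3} exactly as in Proposition~\ref{thm:thm4}, to which the paper's own (terser) proof defers. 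The only difference is that you spell out the details the paper leaves implicit (the sign check via $g'(x)=-\gamma e^{-\gamma x}(2-\gamma x)$, the scalar function $s(\kappa)$, and the primitivity of the Jacobian pattern), all of which are sound.
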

\begin{proof}
First, the Jacobian matrix $\bQ(\bp)^\dagger$ is of the form
\begin{align*}
\diag(e^{-\gamma\bp}-\gamma\bp\odot e^{-\gamma\bp})+\bC^\dagger \diag(\mathds{1}-e^{-\gamma\bp}+\gamma\bp\odot e^{-\gamma\bp}).
\end{align*}
Notice that $\bQ(\bp)^\dagger$ is differentiable in $\bp$, and for $\gamma\leq 1$, is a (column) stochastic matrix with non-negative entries. Therefore, by Corollary \ref{cor1}, the map (\ref{repellingwalk}) is contractive in $\ell_1$ and inherits irreducibility from $\bC^{\dagger}$ in $\cS_{n-1}^o$. Following a similar line of argument as in Proposition \ref{theorem4}, uniqueness of the fixed point for map \eqref{repellingwalk} is guaranteed. Next, we write the stationarity conditions
%\begin{equation}
\[
\bp^{\star}=\left(\diag\left(e^{-\gamma\bp^{\star}}\right)+\bC^\dagger\diag\left(\mathds{1}-e^{-\gamma\bp^{\star}}\right)\right)\bp^{\star},
\]
%\label{repellingfixedpoint}
%\end{equation}
equivalently, 
%\begin{align*}
\[
\left(\mathds{1}-e^{-\gamma\bp^{\star}}\right)\odot\bp^{\star}=\bC^\dagger\left(\mathds{1}-e^{-\gamma\bp^{\star}}\right)\odot\bp^{\star},
\]
%\end{align*}
to obtain that  
 \begin{align}
 \left(1-e^{-\gamma p^{\star}_{i}}\right)p^{\star}_{i}=\kappa c_{i}, \quad  i=1,\ldots,n,
 \label{FixedptEqnExpRepel}
 \end{align}
where $c_{i}$ denotes the $i$-th entry of the Frobenius-Perron vector of $\bC$ and
%. The constraint $\sum_{i=1}^{n}c_{i}=1$ yields 
$\kappa=\sum_{i=1}^{n}\left(1-e^{-\gamma p^{\star}_{i}}\right)p^{\star}_{i}$.
\end{proof}

\subsection{\bf Case $\mathbf{r(x)=e^{-\gamma x}}$ for $\mathbf{\gamma> 1}$.}
In this case too there exists a unique fixed point in any dimension (any $n$).
This follows easily as the fixed-point conditions are the same,
\begin{align*}
\left(1-e^{-\gamma p^{\star}_{i}} \right) p^{\star}_{i}=\kappa c_{i}.
\end{align*}
Then, for all $\gamma>0$, $(1-e^{-\gamma x})x$ is a monotonically increasing starting at $0$ for $x=0$. Solving for a given $\kappa$, the sum $\sum_{i=1}^n p^{\star}_{i}(\kappa)$ is also monotonically increasing function of $\kappa$ and its value exceeds $1$ for a suitable $\kappa$. Thus, there is a unique solution $p^{\star}_{i}(\kappa)$ which is a probability vector (and the $p^{\star}_{i}$'s sum up to $1$).

However, interestingly, the nonlinear dynamics now display diverse behaviors.
Below we give three examples. In the first two the unique fixed point is attractive, but they differ, in that assurances for stability are drawn (for the second example) by computing the norm of the differential of higher iterants ($2$nd in this case). In the third example we observe a $2-$periodic attractive orbit.

\begin{center}
\begin{figure}[t]
%\vspace*{-0.31in}
  \centering
    \includegraphics[width=0.87\linewidth]{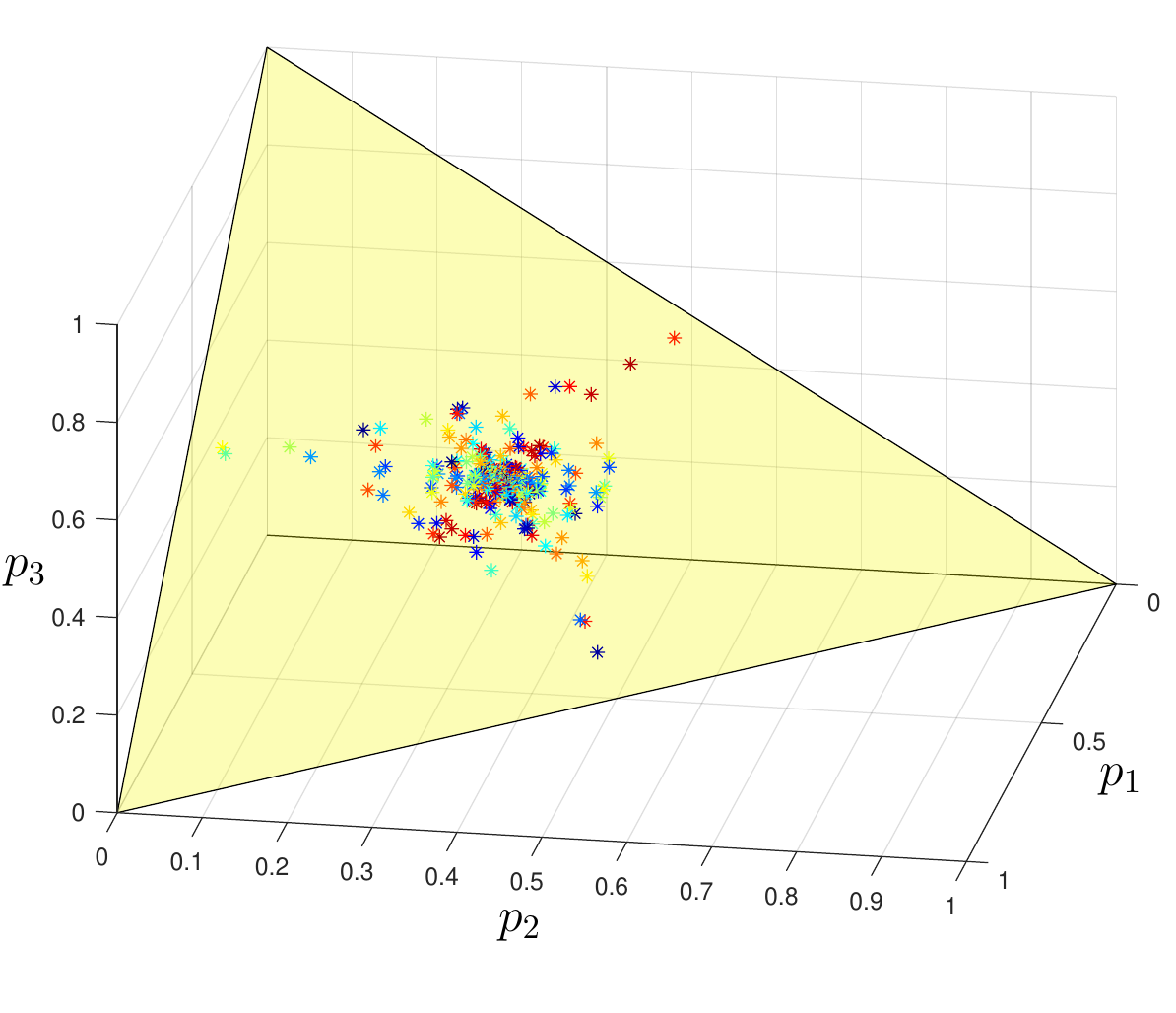}
%    \vspace*{-0.25in}
    \caption{For the $3$-state exponential model \eqref{repellingwalk} with $\gamma = 4$ and influence matrix $\bC$ given by (\ref{CmatrixExampleSec3D1}), trajectories converge to the unique stable fixed point $\bp^{\star} = (1/3,1/3,1/3)^{\dagger}$.}
    \label{ExpRepellingExample1}
    %\vspace*{-0.25in}
\end{figure} 
\end{center}

\subsubsection{Example}
We consider a 3-state Markov chain with $\gamma=4$, and 
\begin{eqnarray}
\bC=\left[\begin{matrix}0 & 0.5 & 0.5\\0.5 & 0 & 0.5\\ 0.5 & 0.5 & 0\end{matrix}\right].
\label{CmatrixExampleSec3D1}
\end{eqnarray}
Since $\bC$ is doubly stochastic, the unique fixed point for \eqref{FixedptEqnExpRepel} is $\bp^{\star} = (1/3,1/3,1/3)^{\dagger}$, and we have
\[
\differential\bbf |_{\bp^{\star}} =\left[\begin{matrix}-0.0880 & 0.5440 & 0.5440\\0.5440 & -0.0880 & 0.5440\\ 0.5440 & 0.5440 & -0.0880\end{matrix}\right],
\]
and  
\begin{eqnarray*}
\|\differential\bbf|_{\bp^{\star}}\|_{(1)}=\frac{1}{2} \max\{1.2640, 1.2640, 1.2640\}=0.6320 < 1.
\end{eqnarray*}
Using Theorem \ref{PropositionJacobian}, we conclude that $\bp^{\star}$ is a stable fixed point.

\subsubsection{Example}
For $\gamma=4$, now take 
\[
\bC=\left[\begin{matrix}0 & 0 & 1\\0.5 & 0.5 & 0\\ 0.5 & 0.5 & 0\end{matrix}\right].
\]
The unique fixed point is again $\bp^{\star} = (1/3,1/3,1/3)^{\dagger}$.
Here,
\[
\differential\bbf |_{\bp^{\star}} =\left[\begin{matrix}-0.0880 & 0.5440 & 0.5440\\
    0    & 0.4560   & 0.5440\\
    1.0880   &      0 &  -0.0880
    \end{matrix}\right],
\]
and  
\begin{eqnarray*}
\|\differential\bbf|_{\bp^{\star}}\|_{(1)}= 1.1760.
\end{eqnarray*}
However,
\begin{eqnarray*}
\|\differential\bbf^2|_{\bp^{\star}}\|_{(1)}= 0.7911.
\end{eqnarray*}
This ensures local attractiveness.

\subsubsection{Example}
Once again we consider a 3-state Markov chain with $\gamma=4$, but we now take 
\begin{eqnarray}
\bC=\left[\begin{matrix}0 & 0 & 1\\0.8 & 0 & 0.2 \\0.8 & 0.2& 0 \end{matrix}\right].
\label{periodicmatrixc2}
\end{eqnarray}
Uniqueness of a fixed point is  guaranteed. This turns out to be
\[\bp^{\star}=(0.4173,    0.1537,    0.4298)^\dagger.
\]
%and can be obtained numerically as $0.4248 \;  0.1567 \;  0.4273$.
It turns out that
\[\differential\bbf |_{\bp^{\star}} =
  \left[\begin{matrix}-0.1261&    0.6333&    0.9031\\
         0 &   0.2084&    0.2258\\
    1.1261&    0.1583 &  -0.1289
\end{matrix}\right]
\]
has $\ell_1$-norm equal to $1.255$, and so do the differentials of higher order iterants.
However, a stable $2$-periodic orbit now appears
alternating between
\[
\bp^{a}\!=\!(0.1943, 0.1042, 0.7015)^\dagger \mbox{ and }\bp^{b}\!=\!(0.6450, 0.2005, 0.1545)^\dagger.
\]
The periodic orbit is locally attractive. The Jacobians at these two points are
\[
  \differential\bbf |_{\bp^a} = \left[\begin{matrix}   0.1024    &0.4923    &0.8873\\
         0   & 0.3846 &   0.2218\\
    0.8976  &  0.1231 &  -0.1092
      \end{matrix}\right]
 \]   
    and 
    \[
     \differential\bbf |_{\bp^b} =  \left[\begin{matrix}    -0.1197  &  0.7290  &  0.6352\\
         0   & 0.0888 &   0.1588\\
    1.1197  &  0.1822 &   0.2060
    \end{matrix}\right],
    \]
    respectively, and it can be verified that  the norm of their product is $\| \differential\bbf |_{\bp^a} \differential\bbf |_{\bp^b}\|_{(1)}= 0.8750$.
    Interestingly, $\| \differential\bbf |_{\bp^b} \differential\bbf |_{\bp^a}\|_{(1)}= 0.7120$, which is different, but $<1$ too (as expected).  Stability can be ascertained by Proposition \ref{cor5}.
An expalantion, as pointed out by an anonymous referee, is that as a particular state gets ``more probable'', it actually is associated with ``less confidence", and hence there is indecision oscillating between alternatives.

\begin{center}
\begin{figure}[t]
%\vspace*{-0.31in}
  \centering
    \includegraphics[width=0.92\linewidth]{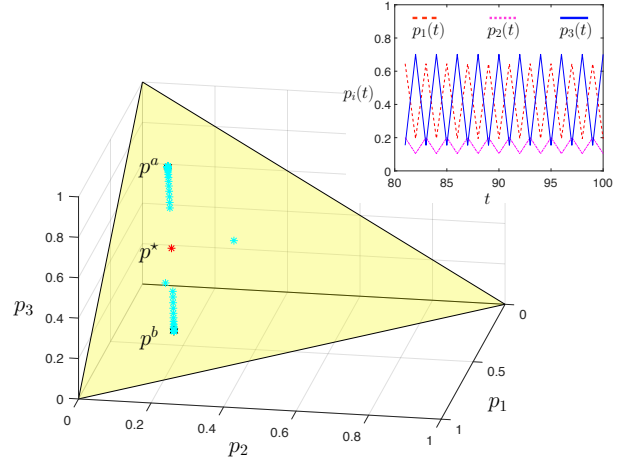}
    \vspace*{0.2in}
    \caption{For the $3$-state exponential model \eqref{repellingwalk}, $\gamma = 4$, and $\bC$ given by (\ref{periodicmatrixc2}), the unique fixed point  $\bp^{\star}=(0.4173,    0.1537,    0.4298)^\dagger$ is unstable and there is an attractive $2$-periodic orbit between $\bp^{a}$ and $\bp^{b}$, verified by the time history (inset graph).}
    \label{ExpRepellingExample13}
    %\vspace*{-0.25in}
\end{figure} 
\end{center}

\begin{remark}\label{sec:continuous}

%The framework established in the previous sections extends naturally along several directions. Herein, we present two possible extensions: the first one is a straightforward generalization to continuous time setting, and in the second one we include a interaction term to account for small group collaborations in decision making processes.

%\subsection{Continuous-time Markov chains}

The framework presented extends naturally to continuous-time.
Indeed, a continuous-time analog of \eqref{eq:lazywalkboth} as a dynamical system on $\cS_{n-1}$ is given by
	\begin{equation}
	\dot\bp(t)= \bL^\dagger(\bI -\diag(\br(\bp(t))))\bp(t),
	\end{equation}
where $\bL=\bC-\bI$ is a Laplacian matrix satisfying $\bL\mathds{1}=0$. It is clear that $(\bI-\diag(\br(\bp(t))))\bL$ is a Laplacian matrix as $(\bI-\diag(\br(\bp(t))))\bL\mathds{1}=0$.  The scaling by $\diag(\br(\bp(t)))$ can be interpreted to play a similar role--it promotes or discourages staying at a state $i$ in accordance with the current value of the corresponding occupation probability $p_i$. 
A relation can be drawn by noting that for a small $h$,
	\begin{eqnarray*}
		\exp[h\bL^\dagger(\bI -\diag(\br(\bp)))] &\approx&\bI+h\bL^\dagger(\bI -\diag(\br(\bp)))
		\\&&\hspace*{-3cm}=(1-h)\bI +h(\,\underbrace{ \diag(\br(\bp))+\bC^\dagger(\bI -\diag(\br(\bp)))}_{\bPi(\br(\bp))}\,).
	\end{eqnarray*}
The special case when $\br(\bp)=\bp$ was recently considered in \cite{chen2017distributed}.
For general $\br(\cdot)$ as in \eqref{eq:options}, the existence of fixed points can be ascertained along similar lines as in the discrete-time setting.
%
%For instance, if $\bp^\star$ is a fixed point, it must satisfy 
%	\[
%		(\mathds{1}-\br(\bp^\star))\circ\bp^\star = \kappa \bc,
%	\]
%where $\bc$ is the eigenvector of $\bC$ associated with eigenvalue  $1$ (equivalently, $\bc$ is the eigenvector of $\bL$ associated with eigenvalue $0$). Analysis of the various cases proceeds similarly (e.g.,\ see Theorem \ref{theorem4}). In particular, regarding stability of fixed points, 
%if $\bdelta\in\cT$ is a small perturbation of $\bp$, 
%	\begin{equation}\nonumber
%	\dot\bdelta \approx [\bL^\dagger(\bI-\diag(\br(\bp)))-\bL^\dagger\diag( \dot\br(\bp)\bp)]\bdelta.
%	\end{equation}
%It follows
%	\[
%		\bdelta(dt)\approx [\bI+(\bL^\dagger(\bI-\diag(\br(\bp)))-\bL^\dagger\diag( \dot\br(\bp)\bp))
%		dt]\bdelta(0).
%	\]
%Reasoning as in Section \ref{sec:stability}, we can establish $\ell_1$-contractiveness of the above map and ascertain the global or local stability of fixed points accordingly. As an example, we note that for $r(x) = 1-e^{-\gamma x}$ or $r(x)=\gamma x$ and $\gamma< 1$ the fixed point is globally attractive. The details are omitted.
\end{remark}

\section{DeGroot-Friedkin Model and its Variants}\label{sec:DGF}
%\section{Linear Influence Models}\label{sec:DGF}

We now consider the two classes of nonlinear Markov chains with $r(x)=\gamma x$ and $1 - \gamma x$, for $0<\gamma\leq 1$. The bounds $0<\gamma\leq 1$ ensure that $\bPi(\bp)$ (in \eqref{eq:lazywalk}) remains stochastic for all values of the probability vector $\bp$ and any $\bC$. For small values of $\gamma$, $\gamma x\simeq 1-e^{-\gamma x}$ and, evidently, these models approximate the corresponding exponential models of Section \ref{sec:Gs}.

\subsection{\bf Case $\mathbf{r(x)=\gamma x}$}\label{subsecLazyLinear}

The case where $r(x)=x$ and $\bC$ is restricted to be {\em doubly stochastic} has been studied in \cite{xu2015modified} and referred to as a {\em modified/one-step DeGroot-Friedkin} model. Existence and stability of the fixed point were analyzed and, in particular, it was conjectured that the equilibrium is stable for any irreducible row stochastic matrix $\bC$ (see \cite{xu2015modified}). Herein, we consider the general class where $r(x)=\gamma x$. For this class of models, very much as in the case of the exponential models, we can ascertain $\ell_{1}$ strict-contractivity for a range of values for $\gamma$, while for other values, we can ascertain stability on a case by case basis.
We begin with the following proposition for general irreducible stochastic $\bC$ and $\gamma\leq\frac{1}{2}$.\\

\begin{prop}\label{gammax}
	For $\gamma \leq \frac{1}{2}$ consider
\begin{subequations}\label{eq:lazywalklinear}
\begin{align}
	&\bp(t)\mapsto  \bbf(\bp(t))=\bp(t+1), \mbox{ where}\\
&\bp(t+1) = \left(\diag(\gamma\bp(t))+\bC^\dagger \diag(\mathds{1}-\gamma\bp(t))\right) \bp(t).\label{27b}
\end{align}
\end{subequations}
The map $\bbf$ is contractive in $\ell_1$, the iteration for $t=0,1,\ldots$ converges to a unique fixed point $\bp^\star=\lim_{t\to\infty}\bp(t)$, and
\begin{eqnarray}
\left(1-\gamma p^{\star}_{i}\right)p^{\star}_{i} = \kappa c_{i}, \mbox{ for a suitable } \kappa>0.
\label{FixedPointEqSeqDF}
\end{eqnarray}
\end{prop}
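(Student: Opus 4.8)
The plan is to follow the template already established for the exponential models in Proposition~\ref{theorem4}, since the entire argument carries over once the Jacobian has been computed. First I would compute the differential $\differential\bbf$ using the general formula for the Jacobian of maps of the form \eqref{eq:class}. With the reinforcement function $r(x)=\gamma x$ one has $r'(x)=\gamma$, so the relevant diagonal entries become $r(p_i)+p_i r'(p_i)=2\gamma p_i$, and the Jacobian takes the form
\[
\bQ(\bp)^\dagger = \diag(2\gamma\bp) + \bC^\dagger\diag(\mathds{1}-2\gamma\bp).
\]

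Next I would observe that the hypothesis $\gamma\le\tfrac12$ is exactly what guarantees $0\le 2\gamma p_i\le 1$ for every $p_i\in[0,1]$, so that both $\diag(2\gamma\bp)$ and $\diag(\mathds{1}-2\gamma\bp)$ have entries in $[0,1]$ and $\bQ(\bp)^\dagger$ is a column-stochastic matrix with nonnegative entries. Contractivity in $\ell_1$ then follows immediately from Corollary~\ref{cor1}, with no further work.

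For the fixed point, I would rearrange the stationarity condition $\bp^\star=(\diag(\gamma\bp^\star)+\bC^\dagger\diag(\mathds{1}-\gamma\bp^\star))\bp^\star$, cancelling the common term $\gamma\bp^\star\odot\bp^\star$, into
\[
(\mathds{1}-\gamma\bp^\star)\odot\bp^\star = \bC^\dagger\big((\mathds{1}-\gamma\bp^\star)\odot\bp^\star\big),
\]
which shows that $(\mathds{1}-\gamma\bp^\star)\odot\bp^\star$ is fixed by $\bC^\dagger$, hence proportional to $\bc$, giving \eqref{FixedPointEqSeqDF}. Uniqueness follows just as in Proposition~\ref{theorem4}: the scalar map $g(x)=(1-\gamma x)x$ satisfies $g'(x)=1-2\gamma x\ge 0$ on $[0,1]$ precisely when $\gamma\le\tfrac12$, so $g$ is monotone, each $p^\star_i$ is pinned down by $\kappa$ through $g(p^\star_i)=\kappa c_i$, and $s(\kappa):=\sum_i p^\star_i(\kappa)$ increases from $s(0)=0$ and meets the level $1$ at a unique $\kappa>0$. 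Global attractiveness would then come from the same irreducibility/aperiodicity argument as before: since $\kappa>0$ and $c_i>0$ force $p^\star_i>0$, the fixed point lies in $\cS_{n-1}^o$; on the interior the off-diagonal pattern of $\bQ(\bp)^\dagger$ matches that of $\bC^\dagger$ (so $\bQ$ inherits irreducibility) and its diagonal $2\gamma p_i$ is strictly positive (so $\bQ$ is aperiodic), whence the Jacobian of some iterant is everywhere positive and Proposition~\ref{cor3} applies.

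The feature I would emphasize is that the threshold $\gamma\le\tfrac12$ does double duty: it is simultaneously the condition making the Jacobian nonnegative (via $2\gamma p_i\le 1$) and the condition making the fixed-point function monotone (via $g'=1-2\gamma x\ge 0$), both stemming from the same factor of $2$ produced by differentiating $x\cdot r(x)$. There is no genuine obstacle beyond this bookkeeping; the only point requiring a moment's care is confirming that $1-2\gamma p_i>0$ holds \emph{strictly} in the open interior, where $p_i<1$, so that the inherited irreducibility and aperiodicity of $\bQ$ are available for the attractiveness step.
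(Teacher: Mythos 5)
Your proposal is correct and follows essentially the same route as the paper: compute the Jacobian $\bQ(\bp)^\dagger=\diag(2\gamma\bp)+\bC^\dagger\diag(\mathds{1}-2\gamma\bp)$, invoke Corollary~\ref{cor1} for $\ell_1$-contractivity when $\gamma\le\tfrac12$, rearrange the stationarity condition to get proportionality to $\bc$, and obtain uniqueness and global attractiveness by the same monotonicity and irreducibility/aperiodicity argument as in Proposition~\ref{thm:thm4} via Proposition~\ref{cor3}. The only difference is that you spell out the details that the paper compresses into ``follows a similar argument as in Proposition~\ref{thm:thm4},'' which is a faithful expansion rather than a different approach.
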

\begin{proof}
As before, the Jacobian $\differential\bbf$ is now
\begin{align*}
\bdelta\mapsto &\underbrace{\left( \diag(2\gamma\bp) + \bC^\dagger \diag(\mathds{1} - 2\gamma\bp) \right)}_{\bQ^{\dagger}(\bp)}\bdelta.
\end{align*}
For $0<\gamma\leq \frac{1}{2}$, $\bQ(\bp)$ is element-wise non-negative. Corollary \ref{cor1} ensures that $\bbf$ is contractive in $\ell_1$. The unique fixed point $\bp^{\star}$ satisfies
\[
\bC^{\dagger} (\mathds{1}-\gamma\bp^{\star})\odot\bp^{\star} = (\mathds{1}-\gamma\bp^{\star})\odot\bp^{\star},
\]
and therefore, $p_{i}^{\star}$ satisfies $\left(1-\gamma p^{\star}_{i}\right)p^{\star}_{i} = \kappa c_{i}$ with $\kappa = 1 - \gamma||\bp^{\star}||_{2}^{2}$. The global stability of $\bp^\star$ follows a similar argument as in Proposition \ref{thm:thm4}.
\end{proof}

For the range $\gamma \in[ \frac{1}{2},1]$ all-encompassing conclusions cannot be drawn and examples have to be worked out on a case by case basis. However, more can be said based on the induced norm of $\differential\bbf$ even when the elements of $\differential\bbf$ may have negative entries. Specifically, it is possible to obtain a closed-form expression for $\max_{\bp\in\cS_{n-1}}\|\differential\bbf|_\cT\|_{(1)}$ for $\gamma \in (\frac{1}{2},1]$. 
%Then requiring $\max_{\bp\in\cS_{n-1}}\|\differential\bbf|_\cT\|_{(1)} < 1$, it can be shown that 
If $\bC$ has zero diagonal (a standard assumption in DeGroot-Friedkin literature), then for $\frac{1}{2} < \gamma < \frac{1}{2}\left(1 + \min_{i\neq j}C_{ji}\right)$, it can be shown that the map $\bbf$ remains $\ell_{1}$-contractive and consequently $\bp^\star$ in (\ref{FixedPointEqSeqDF}) remains globally attractive.
%}}
%We next provide an example to illustrate how the theory in Section \ref{sec:stability} can be used, in this range too, to guarantee  $\ell_{1}$-stability under suitable assumptions.
In passing, we note that for $\gamma=1$, trivially, the vertices of $\cS_{n-1}$ are fixed points while, in general, when $\gamma\neq 1$, this is not the case. Also, when $\bC$ is doubly stochastic and $\gamma\neq 1$,  $\frac{1}{n}\mathds{1}$ is the unique\footnote{That $\frac{1}{n}\mathds{1}$ is a fixed point can be verified directly, whereas the fact that there is no other fixed point can be argued in a similar manner as \cite[Theorem 2]{xu2015modified}.}
 fixed point of \eqref{eq:lazywalklinear}.

%\subsubsection{Example}
%We consider an $n$-state ($n\geq 3$) Markov chain \eqref{eq:lazywalklinear} with doubly stochastic influence matrix
%\begin{eqnarray}
%\bC = \begin{bmatrix}
%0 & \frac{1}{n-1} & \hdots & \frac{1}{n-1}\\
%\frac{1}{n-1} & 0 & \hdots & \frac{1}{n-1}\\
%\vdots & \vdots & \ddots & \vdots\\
%\frac{1}{n-1} & \frac{1}{n-1} & \hdots & 0
%\end{bmatrix},
%\label{StochMatrixZeroDiag}
%\end{eqnarray}
%having the unique fixed point $\bp^{\star} = \frac{1}{n}\mathds{1}$. The Jacobian of \eqref{eq:lazywalklinear} evaluated at $\bp^{\star}$ is
%\begin{eqnarray}
%\differential\bbf|_{\bp^\star} = \begin{bmatrix}
%\frac{2\gamma}{n} & \frac{1-\frac{2\gamma}{n}}{n-1} & \hdots & \frac{1-\frac{2\gamma}{n}}{n-1}\\
%\frac{1-\frac{2\gamma}{n}}{n-1} & \frac{2\gamma}{n} & \hdots & \frac{1-\frac{2\gamma}{n}}{n-1}\\
%\vdots & \vdots & \ddots & \vdots\\
%\frac{1-\frac{2\gamma}{n}}{n-1} & \frac{1-\frac{2\gamma}{n}}{n-1} & \hdots & \frac{2\gamma}{n}
%\end{bmatrix},
%\label{JacobianDoublyStocnbynSeqDegroot}
%\end{eqnarray}
%which has eigenvalues $\frac{2\gamma-1}{n-1}$ with multiplicity $n-1$, and a simple eigenvalue 1. For $\gamma \leq 1$, \eqref{JacobianDoublyStocnbynSeqDegroot} is entry-wise positive and we conclude that $p^{\star}$ local attractiveness. In this case, also, the $\ell_1$-induced norm of the Jacobian at $\bp^\star$ is 
%$\lvert \frac{2\gamma-1}{n-1} \rvert<1$.

\subsection{\bf Case $\mathbf{r(x)=1-\gamma x}$}
We first establish that the corresponding map admits a unique fixed point for any $\gamma>0$, and show that it is $\ell_{1}$-contractive for $\gamma \leq \frac{1}{2}$.
\begin{prop}\label{oneminusgammax}
Consider 
\begin{subequations}\label{eq:repellingwalklinear}
\begin{align}
&\bp(t)\mapsto  \bbf(\bp(t))=\bp(t+1)\mbox{ where}\\
&\bp(t+1) = \left(\diag(\mathds{1}-\gamma\bp(t))+\bC^\dagger \diag(\gamma\bp(t))\right) \bp(t).
\end{align}
\end{subequations}
For any $\gamma>0$, there is a unique fixed point $\bp^\star$, where
\begin{eqnarray}
 p^{\star}_{i} = \displaystyle\frac{\sqrt{c}_{i}}{\displaystyle\sum_{i=1}^{n}\sqrt{c}_{i}}, \quad i=1,\hdots,n.
\label{RepellingLinearFixedPoint} 
\end{eqnarray}
For $0<\gamma\leq\frac{1}{2}$, $\bbf$ is $\ell_1$-contractive and in this case $\bp^\star$ is an attractive fixed point.
\end{prop}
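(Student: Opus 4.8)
The plan is to follow the same two-part structure used in Proposition~\ref{theorem4} and Proposition~\ref{repellingresult}: first establish existence and uniqueness of the fixed point via the scalar fixed-point equations, then establish $\ell_1$-contractivity for $\gamma\le\tfrac12$ by checking non-negativity of the Jacobian $\bQ$. First I would derive the stationarity condition. At a fixed point $\bp^\star$ we require $\bp^\star=\bPi(\bp^\star)^\dagger\bp^\star$, i.e.
\[
\bp^\star=\left(\diag(\mathds{1}-\gamma\bp^\star)+\bC^\dagger\diag(\gamma\bp^\star)\right)\bp^\star .
\]
Rearranging, the diagonal part contributes $(\mathds{1}-\gamma\bp^\star)\odot\bp^\star$, so subtracting that from both sides and cancelling leaves $\gamma\bp^\star\odot\bp^\star=\bC^\dagger(\gamma\bp^\star\odot\bp^\star)$, hence $\bp^\star\odot\bp^\star$ is (proportional to) the Frobenius--Perron eigenvector $\bc$ of $\bC^\dagger$. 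Thus $(p^\star_i)^2=\kappa' c_i$ for some $\kappa'>0$, which gives $p^\star_i=\sqrt{c_i}\big/\sum_j\sqrt{c_j}$ after imposing $\sum_i p^\star_i=1$ and positivity. This is exactly \eqref{RepellingLinearFixedPoint}. Note that, pleasantly, this closed form holds for \emph{all} $\gamma>0$ because the $\gamma$ cancels out of the eigenvector condition, so existence and uniqueness require no restriction on $\gamma$.

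Next I would address contractivity for $0<\gamma\le\tfrac12$. Exactly as in the previous propositions, the Jacobian of $\bbf$ is obtained by differentiating $\bPi(\bp)^\dagger\bp$, using $\br(\bp)=\mathds{1}-\gamma\bp$ so that $\br'(\bp)=-\gamma\mathds{1}$ and the diagonal entries pick up $\br(\bp)+\bp\odot\br'(\bp)=\mathds{1}-2\gamma\bp$. This yields
\[
\bQ(\bp)^\dagger=\diag(\mathds{1}-2\gamma\bp)+\bC^\dagger\diag(2\gamma\bp).
\]
For $0<\gamma\le\tfrac12$ and $\bp\in\cS_{n-1}$ we have $0\le 2\gamma p_i\le 1$, so both $\mathds{1}-2\gamma\bp$ and $2\gamma\bp$ are entrywise non-negative; since $\bC^\dagger$ is (column) stochastic with non-negative entries, $\bQ(\bp)^\dagger$ is a non-negative column-stochastic matrix. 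Corollary~\ref{cor1} then gives $\ell_1$-contractivity, and the irreducibility $\bQ$ inherits from $\bC^\dagger$ on $\cS_{n-1}^o$ (together with the nonzero diagonal, giving aperiodicity) lets me invoke Proposition~\ref{cor3} to upgrade contractivity to global attractiveness of $\bp^\star$, precisely mirroring the closing argument of Proposition~\ref{theorem4}.

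The main obstacle, such as it is, is a subtle sign issue in the uniqueness argument rather than in contractivity. Because the fixed-point relation fixes $(p^\star_i)^2\propto c_i$, the positive square root is the only admissible solution on the simplex, so uniqueness is immediate and there is no monotonicity-of-$s(\kappa)$ argument needed here (in contrast to Proposition~\ref{theorem4}); I should be careful to state that $\bp^\star\in\cS_{n-1}^o$ since $\bc$ has strictly positive entries by irreducibility, which is what Proposition~\ref{cor3} requires. I would also remark that for $\gamma>\tfrac12$ the entries $\mathds{1}-2\gamma\bp$ can become negative, so $\bQ$ need no longer be non-negative and contractivity is no longer guaranteed by Corollary~\ref{cor1}; as in the linear case $r(x)=\gamma x$, one would then fall back on the induced-norm bound of Theorem~\ref{thm6} on a case-by-case basis, but that lies outside the scope of this proposition.
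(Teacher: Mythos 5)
Your proposal is correct and follows essentially the same route as the paper's own proof: derive $\gamma\,\bp^\star\odot\bp^\star=\bC^\dagger(\gamma\,\bp^\star\odot\bp^\star)$ to get $p^\star_i\propto\sqrt{c_i}$ for all $\gamma>0$, then observe that $\bQ(\bp)^\dagger=\diag(\mathds{1}-2\gamma\bp)+\bC^\dagger\diag(2\gamma\bp)$ is non-negative and column-stochastic for $\gamma\le\tfrac12$, invoke Corollary~\ref{cor1} for contractivity, and use irreducibility inherited from $\bC^\dagger$ (via Proposition~\ref{cor3}) for attractiveness. If anything, you spell out details the paper compresses into ``as before,'' such as aperiodicity from the positive diagonal and the fact that $\bp^\star$ lies in $\cS_{n-1}^o$.
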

\begin{proof}
The fixed-point condition 
\begin{align*}
\gamma \bp^{\star}\odot\bp^{\star} =  \bC^{\dagger}\gamma\bp^{\star}\odot\bp^{\star}
\end{align*}
implies that $p^{\star}_i$ must equal $\kappa \sqrt{c_i}$, for each $i$ and some $\kappa>0$. Thus, the fixed point is always unique and is as claimed.
For $0<\gamma\leq \frac{1}{2}$,
the Jacobian $\differential\bbf$ 
\begin{align*}
\bdelta\mapsto &\underbrace{\left( \diag(1-2\gamma\bp) + \bC^\dagger \diag(2\gamma\bp) \right)}_{\bQ^{\dagger}(\bp)}\bdelta
\end{align*}
is element-wise non-negative, inherits irreducibility from $\bC^{\dagger}$ in $\cS_{n-1}^o$, and as before, $\bbf$ is $\ell_1$-contractive.
\end{proof}

Once again, for $\gamma\in[\frac12,1]$, analysis can be done on a case by case basis and no general conclusion can be drawn. Similar to the comment in Section \ref{subsecLazyLinear}, we can find a closed-form expression for $\max_{\bp\in\cS_{n-1}}\|\differential\bbf|_\cT\|_{(1)}$ for $\gamma \in (\frac{1}{2},1]$. Then requiring $\max_{\bp\in\cS_{n-1}}\|\differential\bbf|_\cT\|_{(1)} < 1$, it can be shown that if $\bC$ has zero diagonal (a standard assumption in DeGroot-Friedkin literature), then for $\frac{1}{2} < \gamma < \frac{1}{2}\left(1 - \min_{i\neq j}C_{ij}\right)^{-1}$, the map $\bbf$ is guaranteed to be $\ell_{1}$-nonexpansive.

\section{Groupings}\label{sec:grouping}

It is quite interesting to speculate about the effect of colluding sub-group in opinion forming.
Indeed, everyday experience suggests that opinion is often reinforced within groups of like-minded individuals that draw confidence upon the collective wisdom, or lack of. To account for such interactions, we use a stochastic matrix $W$ to model the joint influence between group members by weighing their collective states via $\br(W\bp)$, which should be contrasted with individual-reinforcement of opinion/confidence modeled by $\br(\bp)$. This is independent and in addition to $\bC$, which is used to model information flow over the total influence network. A reasonable choice for $W$ is to be block diagonal where the blocks correspond to different subgroups of interacting individuals. The special case where $W$ is identity matrix reduces to the earlier setting.

In fact, what we propose herein is an ``interacting particle'' analogue for nonlinear Markov chains, modeled as follows:
\begin{eqnarray}\label{eq:lazywalkW}
&&\bp(t+1) = \bPi(\bp(t))^{\dagger}\bp(t) \nonumber\\
&&\hspace{-0.7cm}= \left(\diag(\br(W\bp(t)))+\bC^\dagger (\bI-\diag(\br(W\bp(t))))\right) \bp(t).
\end{eqnarray}
In particular, using a fixed-point argument as in \cite{jia2015opinion}, we establish existence results for the cases
$\br(x)=x$ and $\br(x)=1-e^{-x}$, and a general stochastic matrix $W$.

\begin{prop}
Let $\br(x)=x$ or $\br(x)=1-e^{-x}$, and $W$ a stochastic matrix. Assume that $\bc_k<\frac{1}{2}$ for all $k$. The Markov nonlinear model \eqref{eq:lazywalkW}, has at least one fixed point in the interior of probability simplex $\cS_{n-1}$.
\end{prop}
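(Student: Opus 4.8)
The plan is to establish existence of a fixed point by applying Brouwer's fixed-point theorem to the continuous map $\bbf$ on the compact convex set $\cS_{n-1}$. The only subtlety is that Brouwer alone would place the fixed point somewhere in $\cS_{n-1}$, possibly on the boundary; the claim asserts the stronger conclusion that a fixed point exists in the \emph{interior} $\cS_{n-1}^o$. So the core of the argument is to rule out fixed points on the boundary (or, following the style of \cite{jia2015opinion}, to construct an invariant compact convex subset of the interior and apply Brouwer there). First I would verify that $\bbf$ maps $\cS_{n-1}$ continuously into itself: continuity is immediate since $\br(x)=x$ and $\br(x)=1-e^{-x}$ are smooth, $W$ and $\bC$ are fixed stochastic matrices, and the transition matrix $\bPi(\bp)$ is row-stochastic for every $\bp\in\cS_{n-1}$ (this is where the hypothesis $\br:[0,1]\to[0,1]$, guaranteed here since $W\bp\in\cS_{n-1}$ has entries in $[0,1]$, is used to keep $\diag(\br(W\bp))$ entrywise in $[0,1]$).

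The heart of the matter is the boundary analysis, and this is the step I expect to be the main obstacle. The idea is to show that the dynamics push mass away from the faces of the simplex, so that no fixed point can have a zero coordinate. Concretely, suppose $\bp^\star$ is a fixed point with some coordinate $p^\star_k=0$. Writing out the $k$-th component of the fixed-point equation $\bp^\star=\bPi(\bp^\star)^\dagger\bp^\star$ and using $\Pi_{kk}(\bp^\star)=r_k+(1-r_k)C_{kk}$ together with the off-diagonal inflow terms $(1-r_i)C_{ik}p^\star_i$ from states $i\neq k$, the stationarity at a boundary point forces all inflow into $k$ to vanish. Because $\bC$ is strongly connected (irreducible), there is always some state with positive probability that connects to $k$, so one derives a contradiction unless the whole support structure is inconsistent with irreducibility. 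The role of the hypothesis $c_k<\tfrac12$ for all $k$ is almost certainly to control the sign or magnitude of the reinforcement relative to the influence matrix so that the inflow estimate is strict; I would track exactly where $c_k<\tfrac12$ enters, as it is the quantitative ingredient that makes the boundary-repulsion inequality go through (it likely guarantees $1-\gamma c_k>0$ or an analogous positivity in the linearized inflow at the relevant face).

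An alternative and perhaps cleaner route, matching the ``fixed-point argument as in \cite{jia2015opinion}'' cited just before the statement, is to find an explicit invariant set. I would look for a box $\{\bp\in\cS_{n-1}\mid p_i\geq\eta\ \forall i\}$ for a small $\eta>0$ and verify $\bbf$ maps it into itself by bounding each coordinate of $\bbf(\bp)$ below using the strong connectivity of $\bC$ and the assumption $\bc_k<\tfrac12$; since this set is compact, convex, and contained in $\cS_{n-1}^o$, Brouwer then yields an interior fixed point directly. I would carry out the steps in this order: (i) state continuity and self-mapping of $\bbf$ on $\cS_{n-1}$; (ii) establish the lower bound on coordinates of $\bbf(\bp)$ that produces the invariant interior subset, using irreducibility of $\bC$ and the bound $\bc_k<\tfrac12$; (iii) invoke Brouwer on that subset. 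The delicate computation is step (ii)—producing the explicit $\eta$ and checking the coordinate inequalities for both choices $\br(x)=x$ and $\br(x)=1-e^{-x}$ simultaneously—and that is where I expect the real work and the essential use of $\bc_k<\tfrac12$ to reside.
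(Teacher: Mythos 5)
Your strategy---apply Brouwer to the dynamics map $\bbf$ itself, and get interiority either by excluding boundary fixed points or by exhibiting an invariant box $\{\bp\in\cS_{n-1}\mid p_i\ge\eta\ \forall i\}$---has a genuine gap: both variants fail for the case $\br(x)=x$. Boundary fixed points cannot be ruled out: take $W=\bI$ (a legitimate stochastic matrix, explicitly allowed by the statement) and $\br(x)=x$; then at any vertex $\be_j$ of the simplex one has $\br(W\be_j)_j=1$, so $\bPi(\be_j)^\dagger\be_j=\be_j$ and \emph{every vertex is a fixed point}, no matter how strongly connected $\bC$ is. Your inflow argument overlooks that the inflow into a state $k$ with $p_k=0$ is $\sum_{i\neq k}(1-\br_i)C_{ik}p_i$, which can vanish because the reinforcement factor $1-\br_i$ vanishes, not because $C_{ik}=0$; irreducibility of $\bC$ therefore yields no contradiction. (The exclusion does work for $\br(x)=1-e^{-x}$, where $1-\br_i=e^{-(W\bp)_i}\ge e^{-1}>0$, but that argument cannot handle the linear case and never uses $\bc_k<\tfrac12$.) The invariant-box route fails as well: with $\br(x)=x$, $W=\bI$, and
\[
\bC=\left[\begin{matrix}0&1&0\\ 0.5&0&0.5\\ 0.5&0.5&0\end{matrix}\right],
\]
which is irreducible and aperiodic with Frobenius--Perron vector $(3/9,4/9,2/9)^\dagger$ (all entries $<\tfrac12$, so the hypothesis holds), the point $\bp=(1-2\eta,\eta,\eta)^\dagger$ lies in the box but
\[
\bbf(\bp)_3=\eta^2+0.5\,\eta(1-\eta)=\eta(0.5+0.5\eta)<\eta
\]
for every small $\eta>0$, so no box of the form $\{p_i\ge\eta\}$ is invariant under $\bbf$. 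This is not a technicality: near the vertices these dynamics genuinely push mass toward the boundary, so no compact subset of $\cS_{n-1}^o$ of that form can be forward invariant.

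The paper's proof sidesteps both obstacles by never applying Brouwer to $\bbf$. By irreducibility, stationarity forces $(\bI-\diag(\br(W\bp)))\bp=\kappa\bc$, i.e.\ $\bp=F(\bp)$ with
\[
F_j(\bp)=\frac{\bc_j/(1-\br_j)}{\sum_k \bc_k/(1-\br_k)},
\]
and Brouwer is applied to this normalized Frobenius--Perron map $F$ on the set $\cS_\epsilon=\{\bp\in\cS_{n-1}\mid p_i\le 1-\epsilon\ \forall i\}$, which stays away from the \emph{vertices} via upper bounds rather than from the whole boundary via lower bounds. Invariance $F(\cS_\epsilon)\subset\cS_\epsilon$ uses precisely the two hypotheses: $\br(x)\le x$ together with stochasticity of $W$ gives $1-\br_j\ge\epsilon$ on $\cS_\epsilon$, while $\bc_k<\tfrac12$ gives $\sum_{k\neq j}\bc_k/\bc_j>1$, which forces $F_j(\bp)<1-\epsilon$ for $\epsilon$ small enough. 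Interiority of the resulting fixed point is then automatic, since every $F_j(\bp)\propto \bc_j/(1-\br_j)>0$. Note that your conjectured role for $\bc_k<\tfrac12$ (positivity of a linearized inflow, ``$1-\gamma c_k>0$'') is also off the mark: its actual role is that each node's Frobenius--Perron centrality is smaller than that of all the others combined, which is what makes the normalized map $F$ pull every coordinate strictly below $1-\epsilon$.
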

\begin{proof}
Any fixed point of \eqref{eq:lazywalkW} must satisfy 
	\[
		\bp_j = F_j(\bp):=\frac{1}{1+\frac{\sum_{k\neq j} \bc_k/(1-\br_k)}{\bc_j/(1-\br_j)}}.
	\]
Since 
	\[
		\sum_{k\neq j} \frac{\bc_k}{\bc_j(1-\br_k)} >\sum_{k\neq j} \frac{\bc_k}{\bc_j} >1,
	\]
there exists $\epsilon>0$ small enough such that
	\[
		\left(\sum_{k\neq j} \frac{\bc_k}{\bc_j(1-\br_k)}-1\right) \epsilon-\sum_{k\neq j}\frac{\bc_k}{\bc_j(1-\br_k)}
		\epsilon^2 >0.
	\]
It follows
	\[
		\frac{1}{1+\sum_{k\neq j} \frac{\bc_k}{\bc_j(1-\br_k)}} < 1-\epsilon.
	\]
Combining the above we obtain
	\[
		F_j(\bp) \le \frac{1}{1+\sum_{k\neq j} \frac{\bc_k}{\bc_j(1-\br_k)}}<1-\epsilon.
	\]
On the other hand, given 
	\[
		\bp \in {\cS_\epsilon}:=\{\bp \in \cS_{n-1} \mid \bp_i \le 1-\epsilon,~\forall i=1,\ldots,n\},
	\]
it is easy to see $\br(W\bp)\in \cS_\epsilon$ due to the facts that $\br(x)\le x$ and $W$ is stochastic. Thus, $F(\cS_\epsilon)\subset \cS_\epsilon$. Clearly, $F$ is continuous. Therefore, by Brouwer fixed-point theorem, there exists $\bp^\star$ such that $\bp^\star = F(\bp^\star)$.
\end{proof}

The ``nonlocal interaction'' matrix $W$ may in general introduce negative off-diagonal elements in $\differential\bbf$. The theory in Section \ref{sec:stability} applies on a case by case basis, but no general conclusion can be drawn at this point regarding global stability of particular class of models as we did earlier.
Indeed,  for $\br(x)=1-e^{-x}$, a matrix representation of the differential \eqref{eq:oneQ} becomes
		\begin{eqnarray*}
		\bQ(\bp)^\dagger &=& 
		\diag(\mathds{1} - e^{-W\bp}) + \bC^\dagger \diag(e^{-W\bp})\\
		&&+(\bI-\bC^\dagger) \diag(\bp\odot e^{-W\bp})W.
	\end{eqnarray*}
This, in general, has negative entries, which however doesn't imply that the fixed point is unstable. The theory in Section \ref{sec:stability} applies and attractiveness of equilibria can be ascertained by e.g., explicitly computing the $\ell_1$-gain of $\differential\bbf|_{\cT}$.
\begin{center}
\begin{figure}[t]
%\vspace*{-0.31in}
  \centering
    \includegraphics[width=0.87\linewidth]{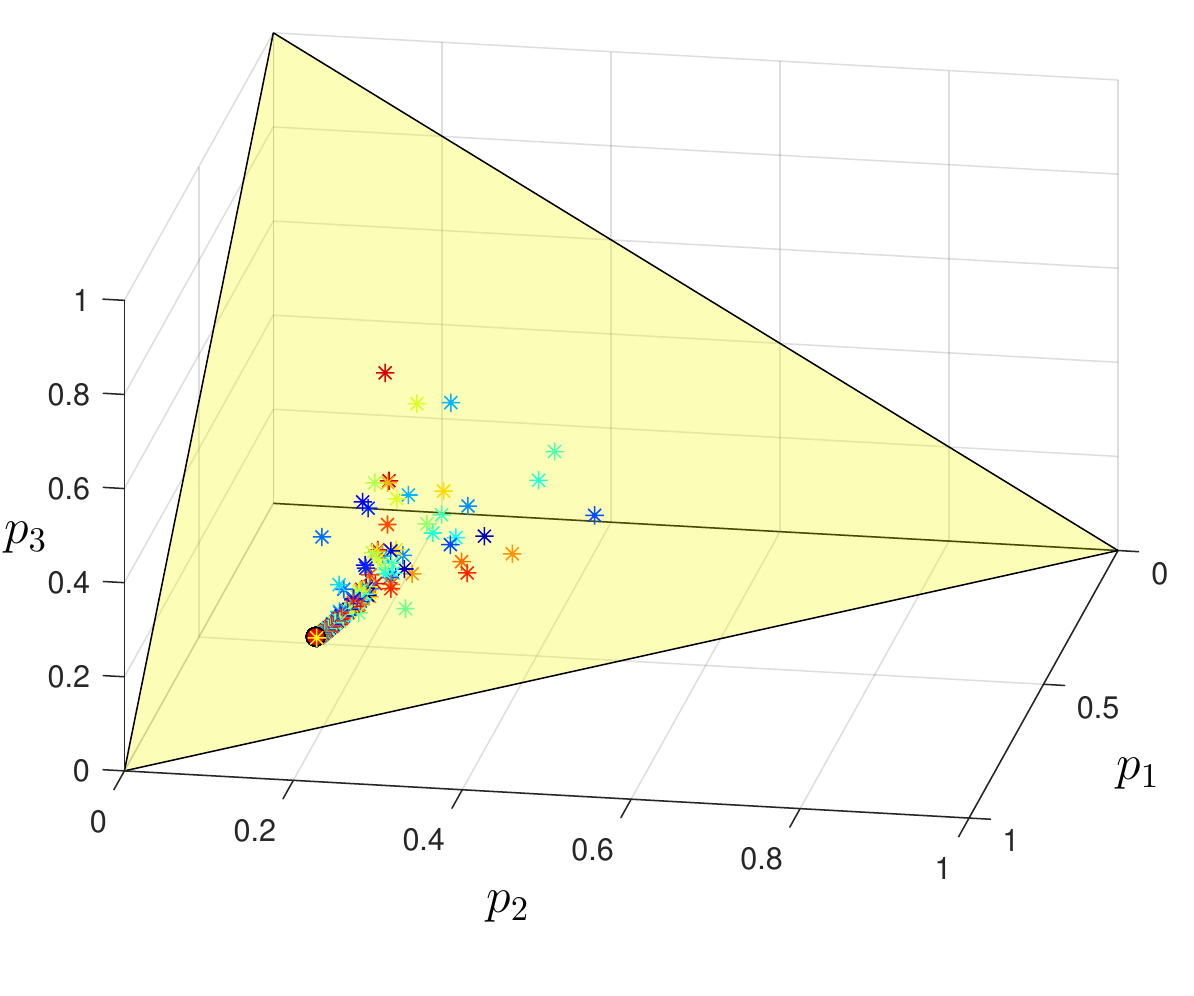}
%    \vspace*{-0.25in}
    \caption{For the $3$-state model \eqref{eq:lazywalkW} with influence matrix $\bC$ and $W$ given by \eqref{subgroup_C_example} and \eqref{subgroup_W_example}, trajectories converge to the unique fixed point $\bp^{\star} = (0.6975, 0.1744, 0.1282)^{\dagger}$.}
    \label{subgroup_example}
    %\vspace*{-0.25in}
\end{figure} 
\end{center}
Below is an example in $\cS_{2}$. We take $\br(x)=1-e^{-Wx}$,
	\begin{eqnarray}
	\bC=\left[\begin{matrix}0.8 & 0.1 & 0.1\\0.4 & 0.2 & 0.4\\ 0.4 & 0.4 & 0.2\end{matrix}\right]
    \label{subgroup_C_example}
	\end{eqnarray}
and
	\begin{eqnarray}
	W=\left[\begin{matrix}0.5 & 0.5 & 0\\0.5 & 0.5 & 0\\ 0 & 0 & 1\end{matrix}\right].
    \label{subgroup_W_example}
	\end{eqnarray}
    
Numerically (Fig. \ref{subgroup_example}), we see that the system has a unique fixed point, $\bp^{\star} = (0.6975, 0.1744, 0.1282)^\dagger$, which is stable.
This results are consistent with element-wise positiveness of the Jacobian of \eqref{eq:lazywalkW} which is evaluated at $\bp^{\star}$,
\[
%\differential\bbf |_{\bp^{\star}} =\left[\begin{matrix}0.8761 & 0.6174 & 0.6601\\0.4592 & 0.5239 & 0.6601\\ 0.1383 & 0.3323 & 0.3864\end{matrix}\right].
 \differential\bbf |_{\bp^{\star}} =\left[\begin{matrix}0.8932 & 0.2812 & 0.3068\\0.0872 & 0.5052 & 0.3068\\ 0.0196 & 0.2136 & 0.3865\end{matrix}\right].
\]
It is worth mentioning that simulation with the same $\bC$ but this time with $W=I_{3\times 3}$ gives $\bp^{\star} = (0.8014, 0.0993, 0.0993)^\dagger$. Hence, as expected, the influence between member of the sub-group has a strengthening effect.   
\begin{remark}
The continuous space and time analogue of the nonlinear model \eqref{eq:lazywalkW} is the nonlinear Fokker-Plank-Vlasov equation
	\[
		\rho_t = \Delta \rho+\nabla\cdot(\rho\nabla V(x))+\nabla\cdot\left(\rho \nabla\left(\int W(x-y)\rho(y)\differential y\right)\right),
	\]
which is used to model the evolution of densities for interacting particles systems under the influence of external potential $V$ and interacting potential $W$ \cite{villani2004trend}. 
\end{remark}

\section{Concluding remarks and further research directions}\label{sec:conclusion}
%\subsection{Effect of bias in Markov models}\label{sec:bias}
\begin{center}
\begin{figure}[ht]
%\vspace*{-0.31in}
  \centering
    \includegraphics[width=0.68\linewidth]{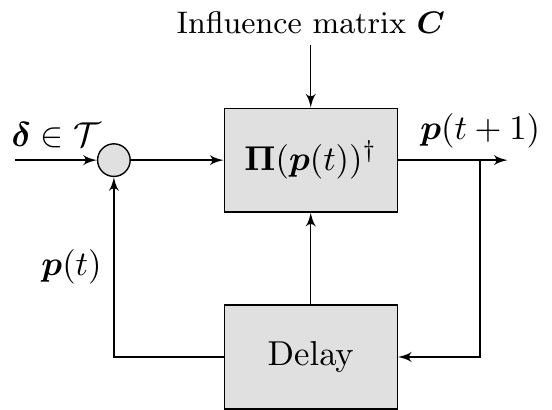}
    \vspace*{0.09in}
    \caption{The effect of bias.}
    \label{fig:feedback}
    %\vspace*{-0.25in}
\end{figure} 
\end{center}
We presented conditions that guarantee global attractiveness of equilibria of nonlinear stochastic maps; these are Theorems \ref{thm1} and \ref{thm6} and Propositions \ref{cor3}, \ref{cor4}, and \ref{cor5} in Section \ref{sec:stability}. The criteria can be effectively used in certain cases where structural features can be exploited. Interest stems from modeling dynamical interactions over social networks. In Sections \ref{sec:Gs} and \ref{sec:DGF}, we highlight application of the theory in representative examples where the complementing statements of Section \ref{sec:stability} are pertinent, respectively. Section \ref{sec:grouping} presents a natural generalization of opinion models where the dynamics are modified by local interactions between subgroupings of the interacting agents. We expect that the development herein, i.e., both the theory as well as the new class of exponential models that we present in Sections \ref{sec:Gs} and \ref{sec:grouping} to provide impetus for further advances. 
In particular, a research direction of practical significance is to quantify the effect of uncertainty and disturbances in such models. For instance, Fig. \ref{fig:feedback} exemplifies the potential for bias/noise $\bm{\delta}$. Such a disturbance must belong to the tangent space of the probability simplex.
%From a practical standpoint, it is quite natural to consider the effect of uncertainty and biases in the dynamic response in opinion dynamics.
In general, it is of interest to quantify the effect of bias/disturbance in the dynamic response (e.g., shift in the position and nature of equilibria).

\section{Acknowledgments}

The research was supported in part by the NSF under Grant ECCS-1509387, the AFOSR under Grants FA9550-15-1-0045 and FA9550-17-1-0435, and the ARO under Grant W911NF-17-1-0429.

\spacingset{.9}
\bibliographystyle{IEEEtran}
\bibliography{refs}

% Generated by IEEEtran.bst, version: 1.14 (2015/08/26)
\begin{thebibliography}{10}
\providecommand{\url}[1]{#1}
\csname url@samestyle\endcsname
\providecommand{\newblock}{\relax}
\providecommand{\bibinfo}[2]{#2}
\providecommand{\BIBentrySTDinterwordspacing}{\spaceskip=0pt\relax}
\providecommand{\BIBentryALTinterwordstretchfactor}{4}
\providecommand{\BIBentryALTinterwordspacing}{\spaceskip=\fontdimen2\font plus
\BIBentryALTinterwordstretchfactor\fontdimen3\font minus
  \fontdimen4\font\relax}
\providecommand{\BIBforeignlanguage}[2]{{%
\expandafter\ifx\csname l@#1\endcsname\relax
\typeout{** WARNING: IEEEtran.bst: No hyphenation pattern has been}%
\typeout{** loaded for the language `#1'. Using the pattern for}%
\typeout{** the default language instead.}%
\else
\language=\csname l@#1\endcsname
\fi
#2}}
\providecommand{\BIBdecl}{\relax}
\BIBdecl

\bibitem{jia2015opinion}
P.~Jia, A.~MirTabatabaei, N.~E. Friedkin, and F.~Bullo, ``Opinion dynamics and
  the evolution of social power in influence networks,'' \emph{SIAM review},
  vol.~57, no.~3, pp. 367--397, 2015.

\bibitem{chen2017distributed}
X.~Chen, J.~Liu, M.-A. Belabbas, Z.~Xu, and T.~Ba{\c{s}}ar, ``Distributed
  evaluation and convergence of self-appraisals in social networks,''
  \emph{IEEE Transactions on Automatic Control}, vol.~62, no.~1, pp. 291--304,
  2017.

\bibitem{altafini2013consensus}
C.~Altafini, ``Consensus problems on networks with antagonistic interactions,''
  \emph{IEEE Transactions on Automatic Control}, vol.~58, no.~4, pp. 935--946,
  2013.

\bibitem{friedkin2015problem}
N.~E. Friedkin, ``The problem of social control and coordination of complex
  systems in sociology: A look at the community cleavage problem,'' \emph{IEEE
  Control Systems}, vol.~35, no.~3, pp. 40--51, 2015.

\bibitem{parsegov2016novel}
S.~E. Parsegov, A.~V. Proskurnikov, R.~Tempo, and N.~E. Friedkin, ``Novel
  multidimensional models of opinion dynamics in social networks,'' \emph{IEEE
  Transactions on Automatic Control}, 2016.

\bibitem{acemouglu2013opinion}
D.~Acemo{\u{g}}lu, G.~Como, F.~Fagnani, and A.~Ozdaglar, ``Opinion fluctuations
  and disagreement in social networks,'' \emph{Mathematics of Operations
  Research}, vol.~38, no.~1, pp. 1--27, 2013.

\bibitem{degroot1974reaching}
M.~H. DeGroot, ``Reaching a consensus,'' \emph{Journal of the American
  Statistical Association}, vol.~69, no. 345, pp. 118--121, 1974.

\bibitem{friedkin2010attitude}
N.~E. Friedkin, ``The attitude-behavior linkage in behavioral cascades,''
  \emph{Social Psychology Quarterly}, vol.~73, no.~2, pp. 196--213, 2010.

\bibitem{FriedkinJohnsen}
\BIBentryALTinterwordspacing
N.~E. Friedkin and E.~C. Johnsen, ``Social influence and opinions,'' \emph{The
  Journal of Mathematical Sociology}, vol.~15, no. 3-4, pp. 193--206, 1990.
  [Online]. Available: \url{http://dx.doi.org/10.1080/0022250X.1990.9990069}
\BIBentrySTDinterwordspacing

\bibitem{hegselmann2002opinion}
R.~Hegselmann, U.~Krause \emph{et~al.}, ``Opinion dynamics and bounded
  confidence models, analysis, and simulation,'' \emph{Journal of artificial
  societies and social simulation}, vol.~5, no.~3, 2002.

\bibitem{bullo2009distributed}
F.~Bullo, J.~Cortes, and S.~Martinez, \emph{Distributed control of robotic
  networks: a mathematical approach to motion coordination algorithms}.\hskip
  1em plus 0.5em minus 0.4em\relax Princeton University Press, 2009, vol.~27.

\bibitem{cao2013overview}
Y.~Cao, W.~Yu, W.~Ren, and G.~Chen, ``An overview of recent progress in the
  study of distributed multi-agent coordination,'' \emph{IEEE Transactions on
  Industrial informatics}, vol.~9, no.~1, pp. 427--438, 2013.

\bibitem{zhang2012convergence}
Z.~Zhang and M.-Y. Chow, ``Convergence analysis of the incremental cost
  consensus algorithm under different communication network topologies in a
  smart grid,'' \emph{IEEE Transactions on Power Systems}, vol.~27, no.~4, pp.
  1761--1768, 2012.

\bibitem{kolokoltsov2010nonlinear}
V.~N. Kolokoltsov, \emph{Nonlinear Markov processes and kinetic
  equations}.\hskip 1em plus 0.5em minus 0.4em\relax Cambridge University
  Press, 2010, vol. 182.

\bibitem{como2015throughput}
G.~Como, E.~Lovisari, and K.~Savla, ``Throughput optimality and overload
  behavior of dynamical flow networks under monotone distributed routing,''
  \emph{IEEE Transactions on Control of Network Systems}, vol.~2, no.~1, pp.
  57--67, 2015.

\bibitem{como2017resilient}
G.~Como, ``On resilient control of dynamical flow networks,'' \emph{Annual
  Reviews in Control}, vol.~43, pp. 80--90, 2017.

\bibitem{forni2014differential}
F.~Forni and R.~Sepulchre, ``A differential lyapunov framework for contraction
  analysis,'' \emph{IEEE Transactions on Automatic Control}, vol.~59, no.~3,
  pp. 614--628, 2014.

\bibitem{forni2016differentially}
------, ``Differentially positive systems,'' \emph{IEEE Transactions on
  Automatic Control}, vol.~61, no.~2, pp. 346--359, 2016.

\bibitem{coogan2017contractive}
S.~Coogan, ``A contractive approach to separable lyapunov functions for
  monotone systems,'' \emph{arXiv preprint arXiv:1704.04218}, 2017.

\bibitem{manchester2017existence}
I.~R. Manchester and J.-J.~E. Slotine, ``On existence of separable contraction
  metrics for monotone nonlinear systems,'' \emph{IFAC-PapersOnLine}, vol.~50,
  no.~1, pp. 8226--8231, 2017.

\bibitem{russo2013contraction}
G.~Russo, M.~Di~Bernardo, and E.~D. Sontag, ``A contraction approach to the
  hierarchical analysis and design of networked systems,'' \emph{IEEE
  Transactions on Automatic Control}, vol.~58, no.~5, pp. 1328--1331, 2013.

\bibitem{markov1906extension}
A.~A. Markov, ``Extension of the law of large numbers to dependent
  quantities,'' \emph{Izv. Fiz.-Matem. Obsch. Kazan Univ.(2nd Ser)}, vol.~15,
  pp. 135--156, 1906.

\bibitem{dobrushin1956central}
R.~L. Dobrushin, ``Central limit theorem for nonstationary {M}arkov chains.
  {I},'' \emph{Theory of Probability \& Its Applications}, vol.~1, no.~1, pp.
  65--80, 1956.

\bibitem{seneta2006markov}
E.~Seneta, ``Markov and the creation of {M}arkov chains,'' in \emph{Markov
  Anniversary Meeting}.\hskip 1em plus 0.5em minus 0.4em\relax Boson Books,
  Raleigh, North Carolina,, 2006, pp. 1--20.

\bibitem{gaubert2015dobrushin}
S.~Gaubert and Z.~Qu, ``Dobrushin’s ergodicity coefficient for markov
  operators on cones,'' \emph{Integral Equations and Operator Theory}, vol.~81,
  no.~1, pp. 127--150, 2015.

\bibitem{xu2015modified}
Z.~Xu, J.~Liu, and T.~Ba{\c{s}}ar, ``On a modified {D}egroot-{F}riedkin model
  of opinion dynamics,'' in \emph{American Control Conference (ACC),
  2015}.\hskip 1em plus 0.5em minus 0.4em\relax IEEE, 2015, pp. 1047--1052.

\bibitem{villani2004trend}
C.~Villani, ``Trend to equilibrium for dissipative equations, functional
  inequalities and mass transportation,'' \emph{Contemporary Mathematics}, vol.
  353, p.~95, 2004.

\end{thebibliography}

\end{document}